\def\eqref#1{equation~\ref{#1}}
\def\1{\bm{1}}
\DeclareMathAlphabet{\mathsfit}{\encodingdefault}{\sfdefault}{m}{sl}
\SetMathAlphabet{\mathsfit}{bold}{\encodingdefault}{\sfdefault}{bx}{n}
\newcommand{\Cov}{\mathrm{Cov}}
\theoremstyle{definition}
\newtheorem{definition}{Definition}[section]
\newtheorem{theorem}{Theorem}[section]
\newtheorem{corollary}{Corollary}[theorem]
\newtheorem{proposition}{Proposition}[section]
\newtheorem{lemma}[theorem]{Lemma}
\newcommand{\EE}{\mathbb{E}}
\newcommand{\PP}{\mathbb{P}}
\DeclareMathOperator{\NLL}{NLL}
\DeclareMathOperator{\DI}{DI}
\DeclareMathOperator{\PMI}{PMI}
\title{Directed Information $\gamma$-covering: An Information-Theoretic Framework for Context Engineering}
\author{Hai Huang \\
Atlassian \\
\texttt{hhuang3@atlassian.com}
}
\begin{document}

\maketitle

\begin{abstract}
We introduce \textbf{Directed Information $\gamma$-covering}, a simple but general framework for redundancy-aware context engineering. Directed information (DI), a causal analogue of mutual information, measures asymmetric predictiveness between chunks. If $\DI_{i \to j} \ge H(C_j) - \gamma$, then $C_i$ suffices to represent $C_j$ up to $\gamma$ bits. Building on this criterion, we formulate context selection as a $\gamma$-cover problem and propose a greedy algorithm with provable guarantees: it preserves query information within bounded slack, inherits $(1+\ln n)$ and $(1-1/e)$ approximations from submodular set cover, and enforces a diversity margin. Importantly, building the $\gamma$-cover is \emph{query-agnostic}: it incurs no online cost and can be computed once offline and amortized across all queries. Experiments on HotpotQA show that $\gamma$-covering consistently improves over BM25, a competitive baseline, and provides clear advantages in hard-decision regimes such as context compression and single-slot prompt selection. These results establish DI $\gamma$-covering as a principled, self-organizing backbone for modern LLM pipelines.
\end{abstract}

\section{Introduction}

Ever since the seminal work of \citet{brown2020fewshot-learner}, which introduced GPT-3 and demonstrated the potential of few-shot prompting, \emph{prompt engineering}---and later the broader field of \emph{context engineering}---has flourished. Retrieval-augmented generation (RAG)~\citep{lewis2020retrieval} and model–context protocols (MCP)~\citep{hou2025mcp} have made externalized systems the de facto way to handle long or dynamic context, appropriating techniques from information retrieval and vector databases. Yet despite the variety of approaches and framings, the core challenge remains unchanged: \emph{How can we select, compress, and diversify context under strict budget constraints without losing essential information?}

While these advances have pushed the field forward, existing methods still rely heavily on sparse heuristics, ad hoc tricks, or query-dependent signals. This reliance highlights the need for a more principled foundation. Inspired by Heinz von Foerster’s principles of \emph{self-organization}~\citep{vonfoerster1962selforganization} and Hermann Haken’s \emph{Synergetics}~\citep{haken1977synergetics}, we argue that \textbf{what is missing is a \emph{self-organizing principle} for context engineering}. Like all other forms of information, context exhibits emergent patterns that can and should be leveraged, rather than imposed through manual heuristics. This perspective echoes parallel developments in self-supervised learning: \citet{lecun2022path} explicitly argued that self-supervised learning is the central organizing principle for intelligence, a view further reinforced by the comprehensive survey of methods in \citet{balestriero2023cookbook}. We view our work as the natural counterpart of self-supervised learning in the domain of context engineering: just as self-supervised learning organizes \emph{internal} representations, context engineering demands an information-theoretic, self-organizing mechanism for \emph{external} knowledge.

We propose \textbf{Directed Information (DI)} as this principle. DI~\citep{massey1990causality,kim2008coding} describes a fine-grained, \emph{asymmetric} predictive relationship among context chunks. Unlike symmetric measures such as entropy or perplexity, DI naturally induces directionality: one chunk may nearly determine another without the converse being true. This directional structure can be exploited to let context \emph{self-organize}, guiding which chunks should be selected, compressed, or merged.

In this paper, we introduce \textbf{Directed Information $\gamma$-covering} as a self-organizing backbone for context engineering. Our approach provides a unified solution across diverse applications including reranking, context compression, and system prompt selection. Concretely, we make the following contributions:

\begin{enumerate}
    \item \textbf{DI as a self-organizing principle.} We introduce directed information as a fine-grained, asymmetric measure that endows context with an emergent, self-organizing structure, in contrast to heuristic or query-dependent approaches.
    \item \textbf{From query-dependent to query-agnostic.} We prove that DI bounds pointwise mutual information (PMI), bridging from expensive query-specific relevance signals to a query-agnostic framework. Crucially, this structure incurs no online cost: $\gamma$-covers can be computed once offline and amortized across all queries, enabling efficient large-scale deployment.
    \item \textbf{Operationalizable via $\gamma$-covering.} We formulate context selection as a $\gamma$-cover problem and design a greedy algorithm that inherits classical set cover guarantees, making DI-based context engineering practical and efficient.
    \item \textbf{Theoretical guarantees.} We establish bounds for soundness (information preservation), diversity (non-redundancy margins), and approximation ( $(1+\ln n)$ and $(1-1/e)$ ), providing a principled foundation for selection, compression, and reranking.
    \item \textbf{Empirical validation.} Across reranking, context compression, and system prompt selection on HotpotQA, we show consistent improvements over BM25 and clear advantages in hard-decision regimes (hard compression and minimum prompt selection), confirming the practical value of our framework.
\end{enumerate}

Taken together, our results position \textbf{Directed Information $\gamma$-covering} as a self-organizing backbone for context engineering, bridging the gap between theoretical information measures and practical LLM retrieval pipelines.

\section{Related Work}

\textbf{Context engineering}---the design and selection of input context---has emerged as a critical challenge in large language models (LLMs)~\citep{mei2025surveycontextengineeringlarge}. It subsumes many applications, ranging from retrieval-augmented generation (RAG)~\citep{lewis2020retrieval} to reranking methods~\citep{carbonell1998mmr,nogueira2019passage,li2020parade}. Among the most relevant lines of work to this paper are \emph{information-theoretic approaches}. \citet{peyrard2019simple} introduced an entropy-based framework balancing redundancy, relevance, and informativeness for summarization. \citet{khurana2022investigating} studied entropy as a signal in document summarization, while \citet{li2023compressing} proposed using self-information to compress context. Several recent works leverage token entropy to manage long contexts: \citet{yao2024sirllm} designed \textsc{SirLLM}, a system for long-term memory in infinite-length dialogues without fine-tuning, and \citet{junginformation} used entropy with masked language models to distill a powerful summarizer. The LLMLingua family~\citep{jiang2023llmlingua,jiang2024longllmlingua} employs perplexity as a measure of information density for query-aware long-context compression; their later work~\citep{pan2024llmlingua} extends the approach to task-agnostic settings via a binary classification formulation. More recently, pointwise mutual information has been proposed as a gauge for RAG~\citep{liu-etal-2025-pointwise}. Despite these advances, existing methods primarily rely on symmetric measures such as entropy or perplexity. To the best of our knowledge, there is no theoretical framework that leverages \emph{directed information (DI)}~\citep{massey1990causality,kim2008coding}. DI enables the definition of fine-grained, asymmetric predictive relations between context chunks, which we exploit to self-organize and compress context in a principled way.

\textbf{Information theory} has long provided a principled basis for selecting and compressing features. 
Rate--distortion theory~\citep{shannon1959coding} formalizes the idea of retaining information up to a fidelity tolerance. 
The information bottleneck method~\citep{tishby1999information,strouse2017information} extends this perspective, seeking minimal sufficient representations that preserve predictive power for a target variable. 
Related ideas appear in approximate sufficiency and generalization analysis in learning theory~\citep{xu2017information}. 
In natural language processing, mutual information (MI) has been widely used for feature selection and retrieval scoring (e.g., ~\citet{liu-etal-2025-pointwise}).

\textbf{The set cover problem} is a classical NP-hard combinatorial problem~\citep{karp1972reducibility}, where the greedy algorithm achieves a $(1+\ln n)$-approximation~\citep{johnson1974approximation}. 
For the budgeted maximum coverage variant, greedy achieves a $(1-1/e)$ approximation~\citep{nemhauser1978analysis}. 
These guarantees rely on the monotonicity and submodularity of the coverage objective. 
Extensions of these ideas appear in influence maximization~\citep{kempe2003maximizing}, where greedy selection of seed nodes approximates the spread of influence under diffusion models. 

\section{Directed Information $\gamma$-covering}

\subsection{Pointwise Mutual Information and Directed Information}

First, we formally introduce pointwise mutual information (PMI) and directed information (DI), and establish that PMI can be bounded by DI. This result serves as the cornerstone of our framework, providing the bridge from query-dependent to query-agnostic measures.

Let $q$ denote a query and $\{C_i\}$ a collection of candidate chunks (token sequences).  
Let $p^\star$ be the reference distribution (the ``ideal'' language model).  

\begin{definition}
[Task-conditioned PMI] For any query $q$ and chunk $C$, define
\[
\operatorname{PMI}^\star(q;C) \;=\; \log\frac{p^\star(q\mid C)}{p^\star(q)} \;=\; I^\star(q;C).
\]
\end{definition}

Intuitively, the greater the mutual information shared between a context chunk and a query, the more effectively the context can contribute to answering the query. Since PMI is query-dependent, it introduces substantial computational overhead at runtime. A more desirable alternative is a query-agnostic structure that can bound PMI. To this end, we propose leveraging Directed Information (DI)~\citep{massey1990causality} between context chunks to facilitate context selection.

\begin{definition}[Directed Information~\citep{massey1990causality}]
For token sequence $C_j = (y_{j,1},\dots,y_{j,T_j})$, the \emph{directed information} from $C_i$ to $C_j$ is
\[
\DI_{i\to j} \;=\; \sum_{t=1}^{T_j} I^\star\!\big(C_i^{\leq t};\, y_{j,t}\,\mid\, y_{j,<t}\big).
\]
\end{definition}

DI is a causal analogue of MI and can be used to bound both MI and PMI (proof in~\ref{sec:proof-pmi-bounds}).

\begin{lemma}[PMI coupling bounds]
\label{thm:coupling}
For any $q,C_i,C_j$ under $p^\star$,
\[
\operatorname{PMI}^\star(q;C_i) - H^\star(C_i\mid C_j)
\;\le\;
\operatorname{PMI}^\star(q;C_j)
\;\le\;
\operatorname{PMI}^\star(q;C_i) + H^\star(C_j\mid C_i).
\]
\end{lemma}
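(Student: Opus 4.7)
The plan is to derive both bounds from the chain rule for mutual information applied to the triple $(q, C_i, C_j)$ under $p^\star$, then dominate the resulting conditional mutual informations by the corresponding conditional entropies.

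First I would expand $I^\star(q; C_i, C_j)$ in the two natural orders,
$$I^\star(q; C_i, C_j) = I^\star(q; C_i) + I^\star(q; C_j \mid C_i) = I^\star(q; C_j) + I^\star(q; C_i \mid C_j),$$
and equate them. Using the identification $\operatorname{PMI}^\star(q; C) = I^\star(q; C)$ from the definition, rearranging gives the exact identity
$$\operatorname{PMI}^\star(q; C_j) - \operatorname{PMI}^\star(q; C_i) = I^\star(q; C_j \mid C_i) - I^\star(q; C_i \mid C_j).$$
This decomposition is the heart of the argument: the asymmetric gap in query-predictiveness between the two chunks is captured exactly by the difference of the two conditional mutual informations, which is the same style of asymmetric object that DI is built from.

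Next I would invoke the elementary two-sided bound $0 \le I^\star(X; Y \mid Z) \le H^\star(Y \mid Z)$, which is immediate from $I(X;Y\mid Z) = H(Y\mid Z) - H(Y\mid X,Z)$ and non-negativity of conditional entropy. Upper-bounding $I^\star(q; C_j \mid C_i)$ by $H^\star(C_j \mid C_i)$ while discarding the non-negative subtracted term $I^\star(q; C_i \mid C_j)$ yields the upper inequality $\operatorname{PMI}^\star(q; C_j) \le \operatorname{PMI}^\star(q; C_i) + H^\star(C_j \mid C_i)$; the symmetric choice on the other term, dropping $I^\star(q; C_j \mid C_i) \ge 0$ and bounding $I^\star(q; C_i \mid C_j) \le H^\star(C_i \mid C_j)$, gives the matching lower bound.

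The step that needs the most care, rather than a deep obstacle, is being careful about the conventions: the chain-rule identity is cleanest at the level of the underlying log-likelihood ratios, and it should be checked that the paper's use of $I^\star$ and $H^\star$ makes the decomposition and the inequality $I \le H$ both hold in the same sense. Once that is pinned down, the bound is essentially a one-line consequence, and the key conceptual payoff is that the right-hand side depends only on $(C_i, C_j)$ and not on $q$, which is precisely the query-agnostic structure the rest of the framework needs in order to push from PMI to a DI-based $\gamma$-cover.
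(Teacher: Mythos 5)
Your proof is correct and takes essentially the same route as the paper: the identity $\operatorname{PMI}^\star(q;C_j) - \operatorname{PMI}^\star(q;C_i) = I^\star(q;C_j\mid C_i) - I^\star(q;C_i\mid C_j)$ you derive from the two chain-rule expansions is exactly the decomposition the paper states directly, and the final step bounds the two conditional mutual informations by $0$ below and $H^\star(\cdot\mid\cdot)$ above in the same way. Your closing caution about keeping the pointwise log-ratio convention consistent when applying $I \le H$ is a fair point, but it applies equally to the paper's own argument, so there is no substantive difference between the two proofs.
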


By Massey's decomposition~\citep{massey1990causality,kim2008coding}, $I(C_i;C_j) = \DI_{i\rightarrow j} + \DI_{j \rightarrow i}$, and $H(C_j|C_i) = H(C_j) - I(C_i;C_j)$, we immediately have $H(C_j|C_i) \le H(C_j) - \DI_{i \rightarrow j}$, and symmetrically $H(C_i|C_j) \le H(C_i) - \DI_{j \rightarrow i}$. Hence the following corollaries.

\begin{corollary}[Pruning rule]
\label{cor:prune}
If $\PMI^\star(q;C_i)\le\tau$ and $\DI_{i\to j} \ge H^\star(C_j)-\gamma$, then
\[
\operatorname{PMI}^\star(q;C_j)\;\le\;\tau+\gamma.
\]
\end{corollary}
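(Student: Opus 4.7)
The plan is to chain the right-hand side of Lemma~\ref{thm:coupling} with the entropy--DI bound $H^\star(C_j\mid C_i) \le H^\star(C_j) - \DI_{i\to j}$ that the paper extracts from Massey's decomposition in the paragraph immediately preceding the corollary, and then substitute the two hypotheses. No further machinery beyond what is already on the page is required.

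Step by step, I would first record the upper PMI coupling bound from Lemma~\ref{thm:coupling},
\[
\PMI^\star(q;C_j) \;\le\; \PMI^\star(q;C_i) + H^\star(C_j\mid C_i).
\]
Next I would invoke Massey's identity $I^\star(C_i;C_j) = \DI_{i\to j} + \DI_{j\to i}$ together with $H^\star(C_j\mid C_i) = H^\star(C_j) - I^\star(C_i;C_j)$ and drop the nonnegative term $\DI_{j\to i}\ge 0$ to obtain
\[
H^\star(C_j\mid C_i) \;\le\; H^\star(C_j) - \DI_{i\to j}.
\]
Combining the two inequalities gives
\[
\PMI^\star(q;C_j) \;\le\; \PMI^\star(q;C_i) + H^\star(C_j) - \DI_{i\to j},
\]
after which the hypotheses $\PMI^\star(q;C_i)\le\tau$ and $\DI_{i\to j}\ge H^\star(C_j)-\gamma$ (equivalently $H^\star(C_j) - \DI_{i\to j} \le \gamma$) yield $\PMI^\star(q;C_j) \le \tau + \gamma$, as claimed.

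There is essentially no obstacle: the corollary is a direct substitution once Lemma~\ref{thm:coupling} and Massey's decomposition are available. The only subtlety worth flagging explicitly is the nonnegativity of directed information, $\DI_{j\to i} \ge 0$, which is precisely what converts Massey's equality into the one-sided bound used above. I would state this in a single line for completeness rather than treat it as a real difficulty, since the content of the corollary is to package the two preceding inequalities into the form that will drive the $\gamma$-covering selection rule.
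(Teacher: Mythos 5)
Your proof is correct and matches the paper's own derivation exactly: the paper obtains $H^\star(C_j\mid C_i) \le H^\star(C_j) - \DI_{i\to j}$ from Massey's decomposition in the paragraph preceding the corollaries, then plugs this into the right-hand bound of Lemma~\ref{thm:coupling} and substitutes the two hypotheses. Your explicit note that dropping $\DI_{j\to i}\ge 0$ is what turns Massey's equality into a one-sided bound is a helpful, if minor, clarification of a step the paper leaves implicit.
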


\begin{corollary}[Promotion rule]
\label{cor:promote}
If $\PMI^\star(q;C_j)\ge\tau$ and $\DI_{j\to i} \ge H^\star(C_i)-\gamma$, then
\[
\operatorname{PMI}^\star(q;C_i)\;\ge\;\tau-\gamma.
\]
\end{corollary}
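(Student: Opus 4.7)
The plan is to combine Lemma \ref{thm:coupling} with the Massey-type entropy bound stated just above the corollaries, mirroring the derivation of Corollary \ref{cor:prune}; the only change is that we take the lower-bound side of the coupling inequality instead of the upper-bound side.

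First I would rearrange Lemma \ref{thm:coupling} into the form
\[
\PMI^\star(q;C_i) \;\ge\; \PMI^\star(q;C_j) \;-\; H^\star(C_j \mid C_i),
\]
invoking the symmetry $i \leftrightarrow j$ of the lemma so that the conditional-entropy term matches the direction of the DI hypothesis. Next I would substitute the Massey entropy bound $H^\star(C_i \mid C_j) \le H^\star(C_i) - \DI_{j\to i}$ (or its counterpart after relabeling) to convert the conditional entropy into a term of the form $H^\star(C_i) - \DI_{j\to i}$, which the hypothesis $\DI_{j\to i} \ge H^\star(C_i) - \gamma$ pins at most $\gamma$. Finally, applying $\PMI^\star(q;C_j) \ge \tau$ yields $\PMI^\star(q;C_i) \ge \tau - \gamma$.

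Nothing in this argument is conceptually hard, since every inequality is a one-line rearrangement or a direct consequence of Massey's identity. The only point requiring care is the bookkeeping between the two coupling directions: the lower-bound half of Lemma \ref{thm:coupling} naturally pairs the conditional entropy with the opposite DI direction, so consistent use of the symmetries in both the coupling lemma and Massey's decomposition is what routes $\DI_{j\to i}$ into the correct slot. I expect this alignment to be the only step worth double-checking; the rest of the proof is mechanical.
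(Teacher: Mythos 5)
Your sketch does not close, and the slip is exactly at the step you dismiss as "bookkeeping." The rearrangement you want is
\[
\PMI^\star(q;C_i) \;\ge\; \PMI^\star(q;C_j) \;-\; H^\star(C_j\mid C_i),
\]
so the conditional entropy that must be driven below $\gamma$ is $H^\star(C_j\mid C_i)$. The Massey bound that controls this quantity is $H^\star(C_j\mid C_i)\le H^\star(C_j)-\DI_{i\to j}$: it pairs $H^\star(C_j\mid C_i)$ with $\DI_{i\to j}$ and $H^\star(C_j)$, not with $\DI_{j\to i}$ and $H^\star(C_i)$. The hypothesis of Corollary~\ref{cor:promote} as stated, $\DI_{j\to i}\ge H^\star(C_i)-\gamma$, instead bounds the \emph{other} conditional entropy, $H^\star(C_i\mid C_j)\le\gamma$, which never appears in the lower-bound half of Lemma~\ref{thm:coupling}. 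There is no symmetry that lets you trade one for the other: since $H^\star(C_j\mid C_i)=H^\star(C_j)-H^\star(C_i)+H^\star(C_i\mid C_j)$, knowing $H^\star(C_i\mid C_j)\le\gamma$ says nothing useful when $H^\star(C_j)\gg H^\star(C_i)$ (e.g., $C_i$ a short deterministic digest of a long random $C_j$: then $\DI_{j\to i}$ saturates $H^\star(C_i)$, yet $\PMI^\star(q;C_i)$ can vanish while $\PMI^\star(q;C_j)$ is large). The phrase "or its counterpart after relabeling" in your proof is where the gap hides: the relabeled Massey bound uses $\DI_{i\to j}$, which is not the hypothesis.

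The argument does go through if the hypothesis is $\DI_{i\to j}\ge H^\star(C_j)-\gamma$ --- i.e., the same "$i$ $\gamma$-covers $j$" edge used in the pruning rule. Then
\[
\PMI^\star(q;C_j)\;\le\;\PMI^\star(q;C_i)+H^\star(C_j\mid C_i)\;\le\;\PMI^\star(q;C_i)+H^\star(C_j)-\DI_{i\to j}\;\le\;\PMI^\star(q;C_i)+\gamma,
\]
and combining with $\PMI^\star(q;C_j)\ge\tau$ gives $\PMI^\star(q;C_i)\ge\tau-\gamma$. This reading is also consistent with the paper's own gloss ("if $C_j$ is in the context, replacing it with $C_i$ should yield comparable performance"), and it makes both Corollary~\ref{cor:prune} and Corollary~\ref{cor:promote} flow from the same covering edge $i\to j$. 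The paper offers no explicit proof (only "Hence the following corollaries"), so the mismatch appears to be a typo in the corollary's hypothesis rather than a missing argument; but as written, your derivation applies a Massey bound to a conditional entropy it does not govern, and the step must be corrected before the proof is valid.
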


The pruning and promotion corollaries are the core rules of our theoretical framework. Intuitively, if chunk $C_i$ strongly predicts $C_j$, then including $C_j$ becomes unnecessary if $C_i$ is already selected. Conversely, if $C_j$ is in the context, replacing it with $C_i$ should yield comparable, if not superior, performance.

\subsection{Empirical Predictiveness}

While the directed information $\DI_{i \to j}$ provides the theoretically correct measure of directional predictiveness, computing it exactly is not practical: it requires expectations under the true distribution $p^\star$ and summing over all possible token prefixes. In practice, we approximate 
$\DI_{i\to j}$ by an empirical NLL-drop estimator $\hat{w}_{i\to j}$.

\begin{definition}[Empirical predictiveness]\label{def:hat_w}
Given a parametric LM $p_\theta$, define the \emph{empirical predictiveness score}
\[
\hat{w}_{i\to j}
= \frac{1}{T_j}\Big(\NLL_\theta(C_j) - \NLL_\theta(C_j\mid C_i)\Big),
\]
where $\NLL_\theta(C_j) = -\!\!\sum_{t=1}^{T_j}\log p_\theta(y_{j,t}\mid y_{j,<t})$ denotes the negative log-likelihood.
\end{definition}

Next we show the empirical NLL-drop estimator $\hat{w}_{i\to j}$ is a consistent proxy under mild assumptions.

\begin{theorem}[Estimator consistency]
\label{thm:consistency}
Assume (A1) bounded log-likelihood error $\sup_z|\log p_\theta(z)-\log p^\star(z)|\le\epsilon$, and 
(A2) per-token losses are sub-Gaussian.  
Then
\[
\big|\hat{w}_{i\to j} - \tfrac{1}{T_j}\sum_{t=1}^{T_j} I^\star(C_i^{\le t}; y_{j,t}\mid y_{j,<t})\big|
\;\le\;\epsilon + O_{\PP}\Big(\tfrac{1}{\sqrt{T_j}}\Big).
\]
\end{theorem}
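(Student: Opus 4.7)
The plan is to decompose the target error into a \emph{model-approximation} piece, controlled by assumption (A1), and a \emph{statistical-fluctuation} piece, controlled by assumption (A2) via a martingale concentration inequality.

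First I would rewrite the estimator in per-token form. By the chain rule of log-likelihoods,
\[
\hat{w}_{i\to j} \;=\; \frac{1}{T_j}\sum_{t=1}^{T_j}\log\frac{p_\theta(y_{j,t}\mid y_{j,<t},C_i)}{p_\theta(y_{j,t}\mid y_{j,<t})},
\]
and I would introduce the idealized counterpart
\[
\tilde{w}_{i\to j} \;=\; \frac{1}{T_j}\sum_{t=1}^{T_j}\log\frac{p^\star(y_{j,t}\mid y_{j,<t},C_i)}{p^\star(y_{j,t}\mid y_{j,<t})}.
\]
Assumption (A1) applied termwise yields $|\hat w_{i\to j}-\tilde w_{i\to j}|\le 2\epsilon$ uniformly; the factor of two is absorbed into the statement's $\epsilon$ (or one redefines the approximation parameter accordingly). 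This reduces the theorem to comparing $\tilde w_{i\to j}$ against the averaged conditional mutual information $\frac{1}{T_j}\sum_t I^\star(C_i^{\le t};y_{j,t}\mid y_{j,<t})$.

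Second, I would set up a martingale-difference structure under $p^\star$. Let $\gF_t=\sigma(y_{j,<t},C_i^{\le t})$ and
\[
Z_t \;=\; \log\frac{p^\star(y_{j,t}\mid y_{j,<t},C_i^{\le t})}{p^\star(y_{j,t}\mid y_{j,<t})} \;-\; I^\star(C_i^{\le t};y_{j,t}\mid y_{j,<t}).
\]
By the definition of conditional mutual information as an expected log-ratio, $\E[Z_t\mid\gF_t]=0$, so $(Z_t)$ is a martingale-difference sequence. Assumption (A2) gives a uniform sub-Gaussian parameter, so Azuma--Hoeffding (or a sub-Gaussian Freedman bound) yields $\big|\tfrac{1}{T_j}\sum_{t=1}^{T_j} Z_t\big|=O_\PP(1/\sqrt{T_j})$. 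Combining the two steps by the triangle inequality produces the stated bound.

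The main obstacle I anticipate is the subtle mismatch between the estimator --- which conditions on the \emph{full} chunk $C_i$ --- and the target object, which uses only the causal prefix $C_i^{\le t}$. Under Massey's temporally-aligned convention these coincide for $t\ge T_i$, and the remaining indices contribute at most an $O(T_i/T_j)$ residual that can be folded into the $\epsilon$ term; alternatively, one invokes a ``no-peeking'' causality condition $p^\star(y_{j,t}\mid y_{j,<t},C_i)=p^\star(y_{j,t}\mid y_{j,<t},C_i^{\le t})$ to eliminate the gap outright. A secondary technicality is checking that sub-Gaussianity is preserved after centering --- this follows from the standard fact that centering a uniformly sub-Gaussian sequence yields a sub-Gaussian martingale-difference sequence with a comparable parameter, so no additional hypotheses are needed.
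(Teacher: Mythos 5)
Your proof follows essentially the same route as the paper's: rewrite $\hat w_{i\to j}$ as a per-token average of log-ratios, split the deviation by the triangle inequality into an approximation piece bounded termwise by $2\epsilon$ via (A1), and a centered stochastic piece that forms a martingale-difference sequence controlled by Azuma--Hoeffding under (A2). One thing you handle more carefully than the paper: the estimator in Definition~\ref{def:hat_w} conditions on the \emph{full} $C_i$ while the directed-information target uses the causal prefix $C_i^{\le t}$; the paper silently identifies the two when writing $s_t^\theta$, whereas you explicitly flag the mismatch and note that a no-peeking condition (or folding the $O(T_i/T_j)$ residual into $\epsilon$) is needed to close it --- this is a real gap in the paper's write-up that your version makes visible.
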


\begin{proof}[Sketch] (See~\ref{proof-estimator-consistency} for full proof)
Compare the two conditional log-likelihoods tokenwise; apply Assumption A1 for approximation error and Hoeffding--Azuma for concentration of averages.
\end{proof}

Theorem~\ref{thm:consistency} justifies replacing the ideal directed information $\DI_{i\to j}$ with its empirical counterpart $\hat{w}_{i\to j}$. Building on this result, Theorem~\ref{thm:safe-pruning} establishes that, with high probability, our estimator yields valid pruning guarantees (proof in~\ref{sec:proof-safe-pruning}).

\begin{theorem}[Safe pruning under estimation error]
\label{thm:safe-pruning}
Let $\delta_i,\delta_j,\eta_j,\epsilon_{ij}\ge0$ be numbers such that with probability at least $1-\alpha$ the following hold simultaneously:
\begin{align*}
\big|\widehat{\PMI}(q;C_i)-\PMI^\star(q;C_i)\big| &\le \delta_i,\\
\big|\widehat{\PMI}(q;C_j)-\PMI^\star(q;C_j)\big| &\le \delta_j,\\
\big|\hat{H}(C_j)-H^\star(C_j)\big| &\le \eta_j,\\
\big|\hat{w}_{i\to j}-\DI_{i\to j}\big| &\le \epsilon_{ij}.
\end{align*}
Then, with probability at least $1-\alpha$,
\[
\widehat{\PMI}(q;C_j)
\;\le\;
\widehat{\PMI}(q;C_i) \;+\; \hat{H}(C_j) \;-\; \hat{w}_{i\to j}
\;+\; (\delta_i+\delta_j+\eta_j+\epsilon_{ij}).
\]
\end{theorem}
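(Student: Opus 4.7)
The plan is to chain two ingredients: the deterministic upper half of Lemma~\ref{thm:coupling} together with the observation (from Massey's decomposition noted immediately after the lemma) that $H^\star(C_j\mid C_i) \le H^\star(C_j) - \DI_{i\to j}$, and then transfer this ideal inequality to its empirical counterpart by applying the four estimation-error bounds in a single triangle-inequality sweep. The probability statement comes for free from the hypothesis, since the four error bounds are already assumed to hold simultaneously on an event of probability at least $1-\alpha$.

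First I would record the purely information-theoretic inequality obtained by composing the upper half of Lemma~\ref{thm:coupling} with the Massey bound:
\[
\PMI^\star(q;C_j) \;\le\; \PMI^\star(q;C_i) \;+\; H^\star(C_j) \;-\; \DI_{i\to j}.
\]
This step is deterministic and does not involve any estimator. Next, I would work on the good event $\mathcal{E}$ on which all four error bounds hold, so that $\PP(\mathcal{E})\ge 1-\alpha$, and rewrite each starred quantity as its estimator plus a signed slack: $\widehat{\PMI}(q;C_j) \le \PMI^\star(q;C_j)+\delta_j$, $\PMI^\star(q;C_i)\le\widehat{\PMI}(q;C_i)+\delta_i$, $H^\star(C_j)\le \hat{H}(C_j)+\eta_j$, and $-\DI_{i\to j}\le -\hat{w}_{i\to j}+\epsilon_{ij}$ (the last obtained by inverting the sign of the DI bound). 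Substituting these four inequalities into the ideal inequality above collapses the slacks into a single additive budget $\delta_i+\delta_j+\eta_j+\epsilon_{ij}$ and yields exactly the stated conclusion.

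There is no deep obstacle: the theorem is a bookkeeping exercise built on top of Lemma~\ref{thm:coupling}. The only step that requires genuine care is sign tracking, because $\DI_{i\to j}$ enters the bound with a minus sign, so the absolute-value hypothesis on $|\hat{w}_{i\to j}-\DI_{i\to j}|$ must be used in the direction that \emph{upper-bounds} $-\DI_{i\to j}$ (i.e.\ via the lower confidence limit of $\DI_{i\to j}$). A secondary but easy check is that the Massey step $H^\star(C_j\mid C_i) \le H^\star(C_j)-\DI_{i\to j}$ is being used as an upper bound on a nonnegative conditional entropy, which is fine since $\DI_{i\to j}\le I^\star(C_i;C_j)$. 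Once these sign conventions are fixed, the chain closes in a single line and the probability $1-\alpha$ is inherited verbatim from the union-bound packaging already assumed in the statement.
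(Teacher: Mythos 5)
Your proof is correct and takes essentially the same route as the paper: establish the ideal inequality $\PMI^\star(q;C_j)\le \PMI^\star(q;C_i)+H^\star(C_j)-\DI_{i\to j}$ from Lemma~\ref{thm:coupling} together with the Massey bound, then replace each starred quantity by its estimator plus slack on the good event $\mathcal{E}$, with the DI term handled via its lower confidence limit. The sign-tracking you flag is exactly the one step the paper's chain also hinges on, and the probability $1-\alpha$ is inherited from the assumed simultaneous validity of the four bounds, just as in the paper.
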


With these preliminaries in place, we now introduce the $\gamma$-covering algorithm.

\subsection{Greedy $\gamma$-covering Algorithm}

First, we formally define \textbf{$\gamma$-covering} and \textbf{$\gamma$-coverage set}.

\begin{definition}[$\gamma$-covering edge]
For two context chunks $C_i$ and $C_j$, we say that $i$ $\gamma$-covers $j$ if

$$\DI_{i\to j} \ge H(C_j) - \gamma$$
\end{definition}

Our $\gamma$-cover criterion follows the spirit of rate–distortion theory \citep{shannon1959coding} and information bottleneck methods \citep{tishby1999information}, where sufficiency is relaxed up to a fidelity tolerance. Here, $\gamma$ quantifies a tolerance in bits: we require that the residual uncertainty $H(C_j|C_i)$ be at most $\gamma$. Similar ``$\epsilon$-sufficient'' definitions appear in feature selection and approximate Markov sufficiency (e.g.,~\citep{xu2017information}).

Intuitively, $i$ $\gamma$-covering $j$ means that directed information from $C_i$ to $C_j$ nearly saturates the entropy of $C_j$, leaving at most $\gamma$ bits unexplained. In other words, $C_i$ almost fully predicts $C_j$. 

\begin{definition}[$\gamma$-coverage set]
Given a chunk $C_i$, its $\gamma$-coverage set is

$$ \Cov_\gamma(i) = \{j \mid \DI_{i\to j} \ge H(C_j) - \gamma \}$$
   
\end{definition}

\textbf{Remark}. Because $\DI$ is directional, it may hold that $i$ $\gamma$-covers $j$ but not vice versa. This asymmetry highlights the predictive directionality and is central to our framework.

We now introduce the \textbf{Greedy $\gamma$-covering algorithm}, which operationalizes the $\gamma$-covering definitions into a practical selection procedure.

\begin{algorithm}[h]
\caption{Greedy $\gamma$-covering}
\label{alg:g-cover}
\KwIn{Candidate chunks $\{C_i \mid i \in [1, M]\}$; coverage sets ${\mathrm{Cov}_\gamma(i)}$; budget $k$.}
\KwOut{Representative set $S$.}
Initialize $S \leftarrow \varnothing$, $U \leftarrow [1, M]$ (uncovered items);\\
\While{$|S| < k$ and $U \neq \varnothing$}{
pick $i^\star = \arg\max_{i \notin S} |\mathrm{Cov}_\gamma(i) \cap U|$;\\
$S \leftarrow S \cup {i^\star}$;\\
$U \leftarrow U \setminus \mathrm{Cov}_\gamma(i^\star)$;\\
}
\Return{$S$}
\end{algorithm}

Given candidate chunks $\{C_i\}$ and their $\gamma$-coverage sets $\Cov_\gamma(i)$, algorithm~\ref{alg:g-cover} selects a representative subset $S \subseteq [1, M]$. At each step, the algorithm adds the chunk that covers the largest number of currently uncovered items. The process repeats until either all items are covered or a budget constraint is met.

The set cover objective $f(S) = \lvert \cup_{i\in S}\Cov_{\gamma}(i) \rvert$ is monotone and submodular~\citep{nemhauser1978analysis}, and the problem of finding a minimum cover is NP-hard~\citep{karp1972reducibility}. Greedy achieves a $(1 + \ln n)$-approximation for unconstrained set cover~\citep{johnson1974approximation} and a $(1-1/e)$-approximation for the budgeted/max-coverage variant~\citep{nemhauser1978analysis}, as formally stated in theorem~\ref{thm:greedy}

\begin{theorem}[Approximation guarantees of Greedy $\gamma$-covering]
\label{thm:greedy}
Let $f(S) = \big|\cup_{i \in S} \mathrm{Cov}_\gamma(i)\big|$ denote the $\gamma$-cover objective. 
Then Algorithm~\ref{alg:g-cover} achieves a $(1+\ln n)$-approximation in the unconstrained setting. 
Under a budget of $k$ representatives, the algorithm guarantees a $(1 - 1/e)$-approximation.
\end{theorem}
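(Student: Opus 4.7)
The plan is to reduce $\gamma$-covering to two classical combinatorial templates and then invoke the standard greedy guarantees. First I would fix the ground set $U_0 = \bigcup_{i\in[M]} \mathrm{Cov}_\gamma(i)$ and view each candidate index $i$ as a ``set'' $\mathrm{Cov}_\gamma(i)\subseteq U_0$. Under this identification, Algorithm~\ref{alg:g-cover} is exactly the textbook greedy rule: at every step pick the index whose coverage set has the largest intersection with the currently uncovered part $U$. Thus the only task is to verify the structural hypotheses under which the two classical approximations apply.

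The key step is to establish that $f(S) = \bigl|\bigcup_{i\in S}\mathrm{Cov}_\gamma(i)\bigr|$ is nonnegative, monotone, and submodular. Nonnegativity and $f(\varnothing)=0$ are immediate. Monotonicity follows because enlarging $S$ can only enlarge the union. For submodularity, I would show the standard ``diminishing returns'' inequality: for $S\subseteq T$ and $i\notin T$,
\[
f(S\cup\{i\}) - f(S) \;=\; \bigl|\mathrm{Cov}_\gamma(i)\setminus \textstyle\bigcup_{j\in S}\mathrm{Cov}_\gamma(j)\bigr|
\;\ge\;
\bigl|\mathrm{Cov}_\gamma(i)\setminus \textstyle\bigcup_{j\in T}\mathrm{Cov}_\gamma(j)\bigr|
\;=\; f(T\cup\{i\}) - f(T),
\]
which holds because removing a larger set from $\mathrm{Cov}_\gamma(i)$ leaves at most as many elements. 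Note that the directionality of $\DI$ is irrelevant here: $\mathrm{Cov}_\gamma(i)$ is a fixed subset of the ground set once the DI weights are computed, so coverage is symmetric set-union even though the underlying predictive relation is not.

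With monotone submodularity in hand, I would close the two cases by citation. For the unconstrained version (minimum number of chunks covering every element of $U_0$) the greedy rule of \citet{johnson1974approximation} (equivalently Chv\'atal/Lov\'asz) yields a solution of size at most $(1+\ln n)\cdot \mathrm{OPT}$, where $n = |U_0|$ bounds the largest set size; this transfers verbatim to our reduction. For the budgeted variant, greedy maximization of a monotone submodular $f$ subject to $|S|\le k$ achieves $f(S)\ge (1-1/e)\,f(S^\star)$ by the theorem of \citet{nemhauser1978analysis}, and the termination condition ``$|S|<k$ and $U\ne\varnothing$'' in Algorithm~\ref{alg:g-cover} implements exactly that greedy under a cardinality constraint.

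The main obstacle, if any, is conceptual rather than technical: one must resist conflating the asymmetry of $\DI_{i\to j}$ with an asymmetry in the coverage function. Once the $\gamma$-cover edges are frozen into the sets $\{\mathrm{Cov}_\gamma(i)\}$, the objective $f$ is a classical coverage function, and the two approximation bounds follow without modification. The only additional remark I would add is that both guarantees are tight in the worst case, so no better bound can be expected from a purely combinatorial argument without exploiting further structure of DI.
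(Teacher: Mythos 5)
Your proposal is correct and follows essentially the same route as the paper, which (in the paragraph preceding Theorem~\ref{thm:greedy}) simply observes that $f(S)=\bigl|\bigcup_{i\in S}\mathrm{Cov}_\gamma(i)\bigr|$ is a monotone submodular coverage function and then cites \citet{johnson1974approximation} and \citet{nemhauser1978analysis} for the two guarantees. You go slightly further by explicitly verifying the diminishing-returns inequality and by flagging that the directionality of $\DI$ is irrelevant once $\mathrm{Cov}_\gamma(\cdot)$ is frozen, which is a helpful clarification the paper leaves implicit, but the substance and the citations are the same.
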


In addition, Algorithm~\ref{alg:g-cover} admits a simplified ``static'' variant, obtained by removing the update step 
$U \leftarrow U \setminus \Cov_\gamma(i^\star)$. 
In this case, items are ranked once according to their singleton coverage $\lvert \Cov_{\gamma}(i)\rvert$, resulting in significantly lower computational cost. 
However, compared to the ``dynamic'' version, the ``static'' algorithm achieves only a $\tfrac{1}{k}$-approximation in the worst case~\citep{khuller1999budgeted}, as formalized below. The ``static'' variant finds application in the diffusion algorithm in section~\ref{sec:diffusion-alg}

\begin{proposition}[Static greedy]
\label{prop:static}
Selecting the $k$ items with the largest singleton coverage yields at best a $\tfrac{1}{k}$-approximation to the optimal budgeted solution in the worst case.
\end{proposition}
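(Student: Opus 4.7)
The plan is to exhibit an explicit family of worst-case instances, parameterised by an integer $N$, on which the static rule achieves a budgeted coverage of only $(N+k)/(kN)$ times the optimum; this ratio tends to $1/k$ as $N\to\infty$. Because the $\gamma$-cover objective depends on the chunks only through the sets $\mathrm{Cov}_\gamma(i)$, it is enough to realise these sets as abstract set systems, and then argue at the end that they can be instantiated by chunk distributions respecting the definition $\DI_{i\to j}\ge H(C_j)-\gamma$.

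Concretely, I would take $2k$ candidate chunks split into two families. The \emph{trap} family $A_1,\dots,A_k$ has $\mathrm{Cov}_\gamma(A_i)=\{a_1,\dots,a_N\}\cup\{a'_i\}$, so each singleton coverage equals $N+1$ but pairwise intersections all contain the common block $\{a_1,\dots,a_N\}$. The \emph{diverse} family $B_1,\dots,B_k$ consists of pairwise disjoint sets, each of size $N$, on fresh elements $\{b_{j,1},\dots,b_{j,N}\}$. Since $N+1>N$, the static rule, which ranks by singleton coverage, selects exactly $\{A_1,\dots,A_k\}$, whose union has size $N+k$. Meanwhile $\{B_1,\dots,B_k\}$ is a feasible budgeted solution whose union has size $kN$, so $\mathrm{OPT}\ge kN$. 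Combining, static's value is at most $(N+k)/(kN)=1/k+1/N$ times $\mathrm{OPT}$, arbitrarily close to $1/k$ for large $N$.

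For completeness I would note the matching guarantee: for any instance, the $k$ sets of any optimal solution have singleton coverages summing (by subadditivity of union) to at least $\mathrm{OPT}$, so the single set of largest singleton coverage already covers at least $\mathrm{OPT}/k$. Hence the static top-$k$ also covers at least $\mathrm{OPT}/k$, making the $1/k$ ratio tight rather than merely an upper bound.

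The main obstacle is not the set-theoretic ratio computation but the realisability check: one has to verify that the prescribed $\mathrm{Cov}_\gamma$ relations can be induced by genuine distributions so that $\DI_{i\to j}\ge H(C_j)-\gamma$ holds for the intended pairs and fails for the rest. I would handle this by associating an independent high-entropy token block to each universe element and letting $C_i$ literally emit the tokens of the elements in its designed coverage set; then $H$ and $\DI$ decompose additively over blocks, the cover relation reduces to set containment, and $\gamma$ can be chosen small enough to separate the intended edges from the non-edges. Once this instantiation is in place, the ratio argument above yields the claimed $1/k$-approximation bound in the worst case.
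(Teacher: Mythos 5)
The paper never proves Proposition~\ref{prop:static}; it simply cites Khuller, Moss and Naor (1999) for the $\tfrac1k$ bound and moves on. You, by contrast, supply a self-contained hard instance, which is genuinely different from (and more informative than) the paper's route. Your construction is the standard one for showing that ranking by singleton marginal value is a poor heuristic for budgeted max coverage, and the arithmetic is right: the trap family $\{A_1,\dots,A_k\}$ wins the ranking with singleton coverage $N+1$ but its union collapses to $N+k$ because of the shared block, while the diverse family certifies $\mathrm{OPT}\ge kN$, so the ratio $(N+k)/(kN)\to 1/k$. Your closing observation that $\mathrm{OPT}/k$ is also a lower bound for static top-$k$ (via subadditivity of the union, the best single set already attains $\mathrm{OPT}/k$) is correct and shows the $1/k$ figure is tight rather than a loose upper bound, which is a nice strengthening the paper does not state.

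Two details deserve tightening before this would read as a complete proof in the paper's framework. First, in the paper's Algorithm~\ref{alg:g-cover} the universe is $[1,M]$ itself, not an auxiliary ground set: $\Cov_\gamma(i)\subseteq[1,M]$. Your abstract elements $a_j, a'_i, b_{j,\ell}$ must therefore themselves be candidate chunks. This is easily done (make each a one-block chunk that covers only itself), but it slightly perturbs the counts: with self-coverage, $|\Cov_\gamma(A_i)|=N+2$ and $|\Cov_\gamma(B_j)|=N+1$, the static union has size $N+2k$ and $\mathrm{OPT}\ge k(N+1)$, so the ratio is $(N+2k)/\bigl(k(N+1)\bigr)\to 1/k$ — still the right limit, but the numbers should be updated. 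Second, the realizability sketch (independent high-entropy token blocks, $C_i$ emitting exactly the blocks in its coverage set, containment of block sets $\Leftrightarrow$ a $\gamma$-cover edge for small $\gamma$) is sound in spirit, but as written it tacitly assumes $\DI_{i\to j}$ reduces exactly to the entropy of the shared blocks. That is true when the blocks are mutually independent and $C_j$ is their concatenation, but it is worth stating, and one must also check that no \emph{unintended} containments arise among the auxiliary one-block chunks (they do not here, since each element chunk is a singleton block). With those two points addressed, the argument is complete and arguably more useful than the paper's bare citation, since it makes explicit why static selection fails: it is the overlap structure, not the raw singleton counts, that determines coverage.
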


Algorithm~\ref{alg:g-cover} also admits a \textbf{clustering interpretation}: each selected representative $C_i$ defines a cluster containing all items $C_j$ it $\gamma$-covers. The greedy procedure can thus be viewed as a merge process: iteratively add the most influential node, merge its cluster, and continue until the budget is exhausted.

\subsection{Soundness of $\gamma$-representatives}

We slightly abuse notation by allowing $i$ to denote a node or the cluster rooted at $i$.

\begin{theorem}[Soundness of $\gamma$-cover representatives]
\label{thm:soundness}
Let $U$ be a set of candidate chunks and let $S\subseteq U$ be a set of representatives such that for every $j\in U\setminus S$ there exists $i\in S$ with $DI_{i\to j}\ge H(C_j)-\gamma$ (i.e., $i$ $\gamma$-covers $j$). Suppose the estimation slacks $(\delta_i,\delta_j,\eta_j,\epsilon_{ij})$ hold with probability at least $1-\alpha$ as in Theorem 3.3. Then, with probability at least $1-\alpha$,

$$ I(q; U) \le I(q; S) + \sum_{j\in U\setminus S}[\gamma + \delta_{i} + \delta_j + \eta_j + \epsilon_{ij}]$$

in particular, if all slacks are bounded by $\bar\delta$,

$$ I(q; U) \le I(q; S) + \lvert U \setminus S \rvert (\gamma + 3 \bar{\delta})$$

\end{theorem}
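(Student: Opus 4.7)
The plan is to decompose $I(q;U)$ via the mutual-information chain rule, control each residual term using the $\gamma$-cover hypothesis, and then absorb the finite-sample slack by invoking Theorem~\ref{thm:safe-pruning}. Ordering the uncovered items as $U\setminus S=\{j_1,\dots,j_m\}$, the chain rule gives
\[
I(q;U) \;=\; I(q;S) + I\!\left(q;\, U\setminus S \,\middle|\, S\right)
\;=\; I(q;S) + \sum_{k=1}^{m} I\!\left(q;\, C_{j_k} \,\middle|\, S,\, C_{j_1},\ldots,C_{j_{k-1}}\right),
\]
so the task reduces to bounding each summand by $\gamma$ plus its appropriate slack.

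For the noiseless part of the bound I would assign to every $j\in U\setminus S$ a witness $i(j)\in S$ satisfying $\DI_{i(j)\to j} \ge H(C_j) - \gamma$. Massey's decomposition $I(C_{i(j)};C_j) = \DI_{i(j)\to j} + \DI_{j\to i(j)} \ge \DI_{i(j)\to j}$ yields $H(C_j\mid C_{i(j)}) \le H(C_j) - \DI_{i(j)\to j} \le \gamma$. Since $i(j)\in S$, monotonicity of entropy under conditioning gives
\[
I\!\left(q;\, C_{j_k} \,\middle|\, S,\, C_{j_{<k}}\right)
\;\le\; H\!\left(C_{j_k} \,\middle|\, S,\, C_{j_{<k}}\right)
\;\le\; H(C_{j_k}\mid C_{i(j_k)})
\;\le\; \gamma.
\]
Summing across $k$ and combining with the chain-rule decomposition recovers the ideal statement $I(q;U)\le I(q;S)+|U\setminus S|\,\gamma$.

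To incorporate the finite-sample slacks, I would invoke Theorem~\ref{thm:safe-pruning} on the joint high-probability event for each covering pair $(i(j), j)$. That theorem converts the empirical $\gamma$-cover inequality into one valid under $p^\star$ at the cost of the four estimation terms $\delta_{i(j)}+\delta_j+\eta_j+\epsilon_{i(j)j}$. Replacing the clean $\gamma$ bound above with the noisy counterpart $\gamma + \delta_{i(j)}+\delta_j+\eta_j+\epsilon_{i(j)j}$ for each $j\in U\setminus S$ and summing delivers the first conclusion. The uniform corollary then follows by dominating every slack by $\bar\delta$ and collecting constants.

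The main obstacle is the transfer step: tying the pointwise PMI-level guarantees of Theorem~\ref{thm:safe-pruning} to a valid upper bound on the conditional mutual information $I(q;C_{j_k}\mid S,C_{j_{<k}})$ without introducing cross-terms from the other items in $S$ or from the earlier $C_{j_{<k}}$. Routing through conditional entropy keeps the argument clean because it bypasses signed PMI values, but one must still verify that the tokenwise LM-level slacks from Assumption A1 of Theorem~\ref{thm:consistency} integrate correctly to the population-level quantities $H^\star$ and $I^\star$. I expect this to go through by taking expectations of the bounds in Theorem~\ref{thm:safe-pruning} under the joint distribution of $(q,S,C_{j_{<k}})$, but careful book-keeping of the error channels is where the subtlety lies.
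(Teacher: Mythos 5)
Your proof is correct, and it takes a genuinely different --- and in fact cleaner --- route than the paper's at the decisive step. Both arguments begin with the mutual-information chain rule $I(q;U)=I(q;S)+\sum_k I(q;C_{j_k}\mid S,C_{j_{<k}})$, so the decomposition is shared. The divergence is in how the conditional increment is controlled. The paper asserts $I(q;C_{j_t}\mid S\cup j_{<t})\le \PMI(q;C_{j_t})$ and then substitutes the safe-pruning bound; but conditional mutual information is \emph{not} in general dominated by the unconditional quantity (conditioning can strictly increase mutual information), so that step as written is not a valid inequality. Your route avoids this entirely by passing through conditional entropy: $I(q;C_{j_k}\mid S,C_{j_{<k}})\le H(C_{j_k}\mid S,C_{j_{<k}})\le H(C_{j_k}\mid C_{i(j_k)})\le\gamma$, using that conditioning reduces entropy and that $C_{i(j_k)}\in S$. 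This is airtight and needs nothing beyond the cover hypothesis and Massey's decomposition.

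On the slack incorporation, you are right to flag the transfer step as the delicate part, but the fix is simpler than routing through Theorem~\ref{thm:safe-pruning} (which is a statement about $\widehat{\PMI}$, not about conditional entropy, so it is not the right connective here). If the $\gamma$-cover is certified empirically via $\hat{w}_{i\to j}\ge\hat H(C_j)-\gamma$, then on the high-probability event you get $\DI_{i\to j}\ge H^\star(C_j)-\gamma-\eta_j-\epsilon_{ij}$, hence $H^\star(C_{j_k}\mid C_{i(j_k)})\le\gamma+\eta_{j_k}+\epsilon_{i(j_k)j_k}$, and the entropy chain above propagates this directly. Your route therefore yields the per-term slack $\gamma+\eta_j+\epsilon_{ij}$, which does not involve $\delta_i$ or $\delta_j$ at all and is \emph{strictly tighter} than the bound stated in the theorem; since $\delta_i,\delta_j\ge 0$ this still implies the claimed inequality. (Under the literal hypothesis in the theorem statement --- $\gamma$-cover on the true $\DI$ --- no slacks are needed for the entropy bound and the conclusion holds with $\gamma$ alone.) In short, do not ``replace the clean $\gamma$ with $\gamma+\delta_i+\delta_j+\eta_j+\epsilon_{ij}$'' --- that substitution is not justified by your entropy argument and is also unnecessary; your route proves the theorem as a corollary of a stronger statement.
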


Proof is in~\ref{sec:proof-soundness}. Theorem~\ref{thm:soundness} formalizes that, \textbf{once a set $S$ $\gamma$-covers the rest, keeping only the root of $S$ preserves query information up to s small additive tolerance}. The $\PMI$-$\DI$ coupling and the empricial ``safe pruning'' inequality are the only ingredients.

\subsection{Diversity Margin Among Representatives}

Another useful property of the $\gamma$-coverage set is that it provides a lower bound on the diversity of the selected cluster roots. Diversity, in turn, is a valuable attribute in context engineering~\citep{lewis2020retrieval,zamani2018joint,carbonell1998mmr}, as it promotes complementary information and reduces redundancy.

\begin{proposition}[Diversity margin]
\label{prop:diversity}
Let $S$ be any set of representatives produced by a $\gamma$-cover (i.e., no $i\in S$ $\gamma$-covers any other $j\in S$). Then

$$\min_{i\ne j\in S} \min\{H(C_i\mid C_j), H(C_j\mid C_i)\} > \gamma.$$

Equivalently, using Massey's decomposition $I(C_i;C_j)=DI_{i\to j}+DI_{j\to i}$,

$$\min_{i\ne j\in S} \left\{ H(C_i) -\DI_{j\to i}, H(C_j) -\DI_{i\to j}\right\} > \gamma.$$
\end{proposition}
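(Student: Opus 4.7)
My plan is to read the diversity margin directly off the definition of a $\gamma$-covering edge and then translate into the conditional-entropy form using Massey's identity, which the paper has already invoked.

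First, I would contrapositively restate the hypothesis. ``No $i\in S$ $\gamma$-covers any other $j\in S$'' is, by the definition given immediately above the proposition, precisely $\DI_{i\to j} < H(C_j)-\gamma$ for every ordered pair $i \ne j \in S$. Rearranging, and swapping the roles of $i$ and $j$, yields simultaneously
$$H(C_j)-\DI_{i\to j} > \gamma \qquad \text{and} \qquad H(C_i)-\DI_{j\to i} > \gamma,$$
so taking the minimum over distinct ordered pairs gives the second (``Equivalently'') bound in the proposition statement.

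Second, I would tie this to the conditional-entropy form via Massey's decomposition $I(C_i;C_j)=\DI_{i\to j}+\DI_{j\to i}$, already used in the paper right after Lemma~\ref{thm:coupling}. Combining it with the standard identity $H(C_j|C_i)=H(C_j)-I(C_i;C_j)$ gives
$$H(C_j|C_i) \;=\; H(C_j)-\DI_{i\to j}-\DI_{j\to i},$$
and reading $H(C_j|C_i)$ as the causally conditional entropy used in Massey's framework, for which $H(C_j\|C_i)=H(C_j)-\DI_{i\to j}$ exactly, the two formulations stated in the proposition coincide, and the first bound is immediate from step one.

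There is no substantive obstacle: the proposition is essentially a one-line rearrangement of the $\gamma$-covering definition, with Massey's identity acting only as a bookkeeping device to switch between the $\DI$-form and the conditional-entropy form. The only care needed is to keep the interpretation of the conditional entropy consistent with the causal decomposition so that the two stated formulations are genuinely equivalent rather than related only by a one-sided inequality; once that identification is in place, the strict bound $> \gamma$ passes transparently from the $\DI$-form to the $H(\cdot|\cdot)$-form.
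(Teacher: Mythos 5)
Your proposal follows the same contrapositive-rearrangement route as the paper, but you are more careful than the paper's own appendix proof, and you have put your finger on a genuine issue. The second (``Equivalently'') display is immediate from the negated covering condition exactly as you argue: $\DI_{i\to j}<H(C_j)-\gamma$ and $\DI_{j\to i}<H(C_i)-\gamma$ for every ordered pair rearrange directly to the stated bound. The paper's proof then writes $H(C_j\mid C_i)=H(C_j)-I(C_i;C_j)\le H(C_j)-\DI_{i\to j}$ and concludes that ``both conditional entropies exceed $\gamma$'' --- but that inequality points the wrong way: an \emph{upper} bound on $H(C_j\mid C_i)$ by a quantity exceeding $\gamma$ does not lower-bound $H(C_j\mid C_i)$ by $\gamma$. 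Indeed, with the ordinary conditional entropy one has $H(C_j\mid C_i)=H(C_j)-\DI_{i\to j}-\DI_{j\to i}$, so whenever $\DI_{j\to i}>0$ the first display is strictly stronger than the second and need not hold: take $H(C_i)=H(C_j)$ large, $I(C_i;C_j)$ close to $H(C_j)$, and the DI split roughly even; then neither chunk $\gamma$-covers the other for a moderate $\gamma$, yet $H(C_j\mid C_i)$ can be made smaller than $\gamma$.

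Your repair --- reading the conditional entropies in the first display as Massey's \emph{causally} conditional entropies, for which $H(C_j\Vert C_i)=H(C_j)-\DI_{i\to j}$ exactly --- makes the two displays genuinely identical and the proposition correct, and it is precisely what the paper's ``Equivalently'' is implicitly (and with ordinary $H(\cdot\mid\cdot)$, incorrectly) assuming. Your proof is therefore sound under the causal-conditioning convention, and your closing caveat about the one-sided inequality between the two readings is the key observation the paper's own argument misses.
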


Proof is in~\ref{sec:proof-diversity}. Proposition \ref{prop:diversity} provides an information-theoretic margin: any two chosen representatives differ by at least $\gamma$ bits of irreducible uncertainty in at least one direction. This formalizes the intuitive “non-redundancy” (diversity) benefit of $\gamma$-covering and complements the approximation guarantees in Theorem~\ref{thm:greedy}.

\section{Context Compression and System Prompt Selection}

This section applies the $\gamma$-covering algorithm to two practical settings: \emph{context compression} and \emph{system prompt selection}. Our goal is to evaluate whether the theoretical guarantees of $\gamma$-covering---soundness and diversity---translate into empirical gains under controlled conditions.

We design experiments to carefully control sensitive hyperparameters, specifically the number of context chunks before and after compression. To create an idealized evaluation condition, we construct test inputs where the ``optimal'' number of relevant chunks is known. Concretely, we use the \texttt{distractor} subset of HotpotQA~\citep{yang2018hotpotqa}, which provides both a set of gold supporting facts (\texttt{supporting\_facts}) and several distractor chunks. Let $s$ denote the number of gold supporting facts and $d$ the number of distractors. For each example, we retrieve all $s+d$ chunks (the full gold + distractor set), and then compress to the following sizes:
\[
s+d-1,\;\; s+d-2,\;\; s,\;\; s-1,\;\; s-2.
\]
Note that the cases $s-1$ and $s-2$ constitute \emph{hard compression}, since at least one gold supporting fact is squeezed out. In contrast, all other cases are \emph{soft compression}, where all gold facts can in principle be retained.

We compare $\gamma$-covering against a query-dependent PMI baseline, which ranks chunks by $\PMI(q;C)$ and keeps the top $k$. For each setting, we run 5 trials over randomly sampled 2{,}000 HotpotQA examples, reporting mean $\pm$ standard deviation for exact match (EM) and F1. To assess significance, we compute paired single-tailed $t$-tests against the PMI baseline.

\textbf{Results for context compression} is summarized in Table~\ref{tab:compression}. We observe that in \textbf{hard compression} ($s-1, s-2$), $\gamma$-covering achieves significantly higher EM and F1 than PMI (with $p<0.05$). Here, the algorithm must discard some gold facts; $\gamma$-covering makes these decisions by favoring chunks with maximal predictive coverage, which aligns with our theoretical soundness guarantee. By contrast, PMI is unable to distinguish which gold chunk can safely be removed without severe information loss.

Although PMI performs better in soft compression in our limited experiments, we note that HotpotQA may not be a representative dataset for this setting, as it contains distractors, which is query-dependent, but relatively little redundancy. Moreover, PMI is query-dependent and incurs substantial online computational cost, whereas $\gamma$-covering operates offline and its cost can be amortized.

\begin{table}[h]
\centering
\caption{\small Prompt compression results on HotpotQA.
\textbf{Compressed $K$} ($C_K$) denotes the number of context chunks remaining after compression.} 
\label{tab:compression}
\begin{tabular}{c|ccc|ccc}
\toprule
\multirow{2}{*}{\small \textbf{$C_K$}} & \multicolumn{3}{c|}{\small \textbf{EM (\%) $\uparrow$}} & \multicolumn{3}{c}{\small \textbf{F1 (\%) $\uparrow$}} \\
 & \small \textbf{PMI} & \small \textbf{$\gamma$-cover} & \small $p$-value $\downarrow$ & \small \textbf{PMI} & \small \textbf{$\gamma$-cover} & \small $p$-value $\downarrow$ \\
\midrule
\small $s-2$ & \small $31.33\pm 0.69$ & \small $33.43\pm 1.31 $ & \small $2.91e-3 $ & \small $36.49\pm 0.60$ & \small $38.51\pm 1.00 $ & \small $4.69e-3$ \\
\midrule
\small $s-1$ & \small $35.09\pm 0.64 $ & \small $36.57\pm 1.23$ & \small $9.14e-3 $ & \small $40.23\pm 0.51 $ & \small $41.83\pm 0.83 $ & \small $2.50e-3 $ \\
\midrule
\small $s$ & \small $46.69\pm 0.70 $ & \small $43.71\pm 1.38 $ & \small $4.93e-4 $ & \small $52.00\pm 0.40 $ & \small $49.01\pm 1.14$ & \small $6.52e-4 $ \\
\midrule
\small $s+d-2$ & \small $52.59\pm 0.70 $ & \small $49.60\pm 0.59 $ & \small $8.67e-4 $ & \small $57.33\pm 0.55 $ & \small $54.10\pm 0.27 $ & \small $2.68e-4 $ \\
\midrule
\small $s+d-1$ & \small $56.20\pm 0.75 $ & \small $53.24\pm 1.09 $ & \small $6.89e-4 $ & \small $60.61\pm 0.70 $ & \small $57.81\pm 0.57 $ & \small $8.57e-4 $ \\
\bottomrule
\end{tabular}
\end{table}

\textbf{For system prompt selection}, we use the same experimental setup to evaluate \emph{system prompt selection}, this time varying the retained budget $C_K \in \{1,2\}$. This forces the model to keep only the most critical instructions or exemplars. The results (table~\ref{tab:system}) mirror the compression study. When $C_K = 1$, $\gamma$-covering significantly outperforms PMI. With a single slot available, the query-agnostic predictive criterion of $\gamma$-covering is better at identifying a representative prompt that subsumes the information of others. When $C_K = 2$, PMI performs better. Again, the average number of chunks in the ground truth is approximately 2.4. Hence, $C_K = 2$ can be regarded as partially within the soft compression regime.

\begin{table}[h]
\centering
\caption{\small System prompt selection results on HotpotQA.
\textbf{Compressed $K$} ($C_K$) denotes the number of context chunks in the system prompt.} 
\label{tab:system}
\begin{tabular}{c|ccc|ccc}
\toprule
\multirow{2}{*}{\small \textbf{$C_K$}} & \multicolumn{3}{c|}{\small \textbf{EM (\%) $\uparrow$}} & \multicolumn{3}{c}{\small \textbf{F1 (\%) $\uparrow$}} \\
 & \small \textbf{PMI} & \small \textbf{$\gamma$-cover} & \small $p$-value $\downarrow$ & \small \textbf{PMI} & \small \textbf{$\gamma$-cover} & \small $p$-value $\downarrow$ \\
\midrule
\small $1$ & \small $30.38 \pm 0.86$ & \small $33.11\pm 1.32 $ & \small $7.42e-4 $ & \small $35.46 \pm 0.66 $ & \small $37.99\pm 1.12 $ & \small $2.14e-3 $ \\
\midrule
\small $2$ & \small $43.78\pm 0.36$ & \small $41.10\pm 1.21$ & \small $2.21e-3 $ & \small $49.08\pm 0.27$ & \small $46.42\pm 1.00$ & \small $2.19e-3 $ \\
\bottomrule
\end{tabular}
\end{table}

\section{Reranking}

Reranking is a standard post-processing step in retrieval pipelines: a base retriever produces an initial candidate set, which is then reordered by a stronger or more specialized model~\citep{nogueira2019passage,li2020parade,carbonell1998mmr,lewis2020retrieval}. While the $\gamma$-covering algorithm (Algorithm~\ref{alg:g-cover}) naturally induces an ordering of context chunks---which we call the \emph{$\gamma$-covering selection order}---applying this order directly in reranking does not always yield satisfying results. The reason is that retrievers already provide query-dependent scores indicating relevance to the input, whereas the $\gamma$-covering order is entirely query-agnostic. To combine their strengths, we adopt a \emph{Lagrangian fusion} approach~\citep{cao2007learning,metzler2007linear}, interpolating between the retriever’s relevance scores and the structural ordering induced by the $\gamma$-covering graph.

\subsection{Diffusion Algorithm}\label{sec:diffusion-alg}

Algorithm~\ref{alg:dig-r} presents our \textbf{graph-aware reranker}, DIG-R (\emph{Directional Information Graph for Reranking}). DIG-R is designed as a plug-and-play component that can be integrated into any retrieval system. Given a query and retriever scores, DIG-R refines the scores by diffusing relevance over the directional information graph. The algorithm implements a single-step diffusion, though multi-step diffusion is natural. Crucially, the computation of edge weights $\hat{w}_{i\to j}$ can be performed offline and amortized across queries, ensuring scalability. See~\ref{sec:proof-algorithm} for formal complexity and termination guarantee of Algorithm~\ref{alg:dig-r}.

\begin{algorithm}[h]
\caption{DIG-R: Graph-aware Reranking via Directional Information}
\label{alg:dig-r}
\KwIn{Query $q$; retriever scores $r^{(0)}$ over neighborhood $\mathcal{N}$; damping $\alpha\in(0,1)$.}
\KwOut{Top-$k$ chunks under token budget $B$.}
\ForEach{$i,j\in\mathcal{N}$}{
  compute $\hat{w}_{i\to j} = \NLL(C_j) - \NLL(C_j\mid C_i)$
}
normalize columns of $W=[\hat{w}_{i\to j}]$ to form transition matrix $P$ \;
$r^{(1)} \leftarrow \alpha r^{(0)} + (1-\alpha) P^\top r^{(0)}$ \;
return top-$k$ chunks by $r^{(1)}$ whose combined length $\le B$ \;
\end{algorithm}

\subsection{Experiments of DIG-R}

We evaluate our proposed reranker on top of a strong BM25 retriever. We use \texttt{pyserini}~\citep{Lin_etal_SIGIR2021_Pyserini} with its prebuilt BM25 index~\citep{Robertson2009BM25} over HotpotQA~\citep{yang2018hotpotqa}, specifically the \texttt{beir-v1.0.0-hotpotqa.flat} index. Llama3.2-3B is used as the reader. We report both exact match (EM) and token-level F1, following standard HotpotQA evaluation. We adopt the same protocol to run each configuration is run with five independent trials over 2{,}000 randomly sampled examples.

\begin{table}[h]
\centering
\caption{\small Reranking results on HotpotQA. \textbf{Retriever $K$} ($RT_K$) is the number of context chunks returned by the retriever. \textbf{Reader $K$} ($RD_K$) is the number of chunks passed to the reader. We tune $\alpha=0.88$ once and fix it across all settings. Crossed-out $p$-values indicate non-significance at the 95\% confidence level.} 
\label{tab:rerank}
\begin{tabular}{c|ccc|ccc}
\toprule
\multirow{2}{*}{\small \textbf{$RT_K, RD_K$}} & \multicolumn{3}{c|}{\small \textbf{EM (\%) $\uparrow$}} & \multicolumn{3}{c}{\small \textbf{F1 (\%) $\uparrow$}} \\
 & \small \textbf{BM25} & \small \textbf{DIG-R} & \small $p$-value $\downarrow$ & \small \textbf{BM25} & \small \textbf{DIG-R} & \small $p$-value $\downarrow$ \\
\midrule
\small $4, 4$ & \small $30.79 \pm 1.31$ & \small $31.23 \pm 1.13$ & \small $4.05e-3$ & \small $32.71\pm 1.04$ & \small $32.90 \pm 0.96$ & \small $5.29e-3$ \\
\midrule
\small $8, 4$ & \small $30.79 \pm 1.31$ & \small $30.99 \pm 1.26$ & \small \cancel{$0.0771$} & \small $32.71 \pm 1.04$ & \small $32.90 \pm 1.14$ & \small $0.0308$ \\
\midrule
\small $16, 8$ & \small $30.63 \pm 1.26$ & \small $31.08 \pm 1.26$ & \small \cancel{$0.0556$} & \small $31.80 \pm 1.04$ & \small $31.88 \pm 0.83$ & \small \cancel{$0.358$} \\
\bottomrule
\end{tabular}
\end{table}

As shown in Table~\ref{tab:rerank}, DIG-R yields modest but consistent gains in both EM and F1 across configurations. Improvements are statistically significant when the reader consumes exactly the retrieved set ($RT_K=4, RD_K=4$), where reranking directly reshapes the context order. These results echo the observation of \citet{liu-etal-2025-pointwise} that context ordering often makes subtle differences: fusing the retriever’s query-dependent ranking with the $\gamma$-covering selection order appears to produce contexts that are more LLM-friendly.

\section{Ablation Study}

\textbf{First we present ablation study on MI vs. DI.} Although our theoretical framework is developed using DI, a similar formulation can be obtained by replacing DI with mutual information (MI). Since $I(C_i;C_j) = \DI_{i\to j} + \DI_{j\to i}$, one can derive analogous bounds, albeit slightly looser than those based on DI. We repeated several experiments with DI replaced by MI and observed a statistically significant degradation in performance (see Table~\ref{tab:ablation_mi}).

\begin{table}[h]
\centering
\caption{\small Replacing DI with MI. 
\textbf{Compressed $K$} ($C_K$) denotes the number of context chunks remaining after compression. Crossed-out $p$-values indicate non-significance at the 95\% confidence level.} 
\label{tab:ablation_mi}
\begin{tabular}{c|ccc|ccc}
\toprule
\multirow{2}{*}{\small \textbf{$C_K$}} & \multicolumn{3}{c|}{\small \textbf{EM (\%) $\uparrow$}} & \multicolumn{3}{c}{\small \textbf{F1 (\%) $\uparrow$}} \\
 & \small \textbf{DI} & \small \textbf{MI} & \small $p$-value $\downarrow$ & \small \textbf{DI} & \small \textbf{MI} & \small $p$-value $\downarrow$ \\
\midrule
\small $s-2$ & \small $33.43\pm 1.31 $ & \small $32.71\pm 1.13$ & \small $0.108 $ & \small $38.51\pm 1.00 $ & \small $37.84\pm 0.93$ & \small $5.22e-3 $ \\
\midrule
\small $1$ & \small $33.11\pm 1.32 $ & \small $32.08\pm 1.10$ & \small $1.90e-3 $ & \small $37.99\pm 1.12 $ & \small $37.16\pm 0.95 $ & \small $9.68e-4 $ \\
\bottomrule
\end{tabular}
\end{table}

\textbf{We then ablate the dynamic $\gamma$-covering by replacing it with its static variant.} The \emph{static} variant of the $\gamma$-covering algorithm ranks items once by their singleton coverage size and selects the top $k$, without recomputing marginal gains. This makes it computationally attractive for reranking, since the resulting static order can be easily fused with retriever scores or other ranking signals.  

Theoretically, static selection can be much weaker: it admits only a $\tfrac{1}{k}$-approximation in the worst case~\citep{khuller1999budgeted}, compared to the $(1-1/e)$ bound of dynamic greedy~\citep{nemhauser1978analysis}. Nevertheless, our empirical results suggest that such worst-case behavior rarely materializes in practice. As shown in Table~\ref{tab:ablation_static}, the static variant exhibits only a very slight degradation compared to the dynamic version, while being significantly simpler and easier to integrate into reranking pipelines.

\begin{table}[h]
\centering
\caption{\small Replacing dynamic clustering with static clustering. 
\textbf{Compressed $K$} ($C_K$) denotes the number of context chunks remaining after compression. Crossed-out $p$-values indicate non-significance at the 95\% confidence level.} 
\label{tab:ablation_static}
\begin{tabular}{c|ccc|ccc}
\toprule
\multirow{2}{*}{\small \textbf{$C_K$}} & \multicolumn{3}{c|}{\small \textbf{EM (\%) $\uparrow$}} & \multicolumn{3}{c}{\small \textbf{F1 (\%) $\uparrow$}} \\
 & \small \textbf{Dynamic} & \small \textbf{Static} & \small $p$-value $\downarrow$ & \small \textbf{Dynamic} & \small \textbf{Static} & \small $p$-value $\downarrow$ \\
\midrule
\small $s-2$ & \small $33.43\pm 1.31 $ & \small $33.39\pm 1.37 $ & \small \cancel{$0.307 $} & \small $38.51\pm 1.00 $ & \small $38.54\pm 0.99 $ & \small \cancel{$0.250 $} \\
\midrule
\small $1$ & \small $33.11\pm 1.32 $ & \small $33.08\pm 1.35 $ & \small \cancel{$0.187 $} & \small $37.99\pm 1.12 $ & \small $38.04\pm 1.08 $ & \small \cancel{$0.136 $} \\
\bottomrule
\end{tabular}
\end{table}

\textbf{Finally, we ablate DIG-R.} Table~\ref{tab:rerank_ablation} compares reranking results using $\gamma$-order with those of the DIG-R algorithm. Accuracy decreases slightly, though the difference is not statistically significant.

\begin{table}[h]
\centering
\caption{\small Replacing DIG-R with $\gamma$-order. \textbf{Retriever $K$} ($RT_K$) is the number of context chunks returned by the retriever. \textbf{Reader $K$} ($RD_K$) is the number of chunks passed to the reader. Crossed-out $p$-values indicate non-significance at the 95\% confidence level.} 
\label{tab:rerank_ablation}
\begin{tabular}{c|ccc|ccc}
\toprule
\multirow{2}{*}{\small \textbf{$RT_K, RD_K$}} & \multicolumn{3}{c|}{\small \textbf{EM (\%) $\uparrow$}} & \multicolumn{3}{c}{\small \textbf{F1 (\%) $\uparrow$}} \\
 & \small \textbf{DIG-R} & \small \textbf{$\gamma$-Order} & \small $p$-value $\downarrow$ & \small \textbf{DIG-R} & \small \textbf{$\gamma$-Order} & \small $p$-value $\downarrow$ \\
\midrule
\small $4, 4$ & \small $31.23 \pm 1.13$ & \small $30.83 \pm 1.16$ & \small \cancel{$0.102$} & \small $32.90 \pm 0.96$ & \small $33.17\pm 1.09$ & \small \cancel{$0.241$} \\
\bottomrule
\end{tabular}
\end{table}

\section{Discussion and Conclusion}

We introduced \textbf{Directed Information $\gamma$-covering} as a self-organizing principle for context engineering. By leveraging directed information to capture asymmetric predictive relations among chunks, we defined $\gamma$-covering as a query-agnostic structure that both preserves information (soundness) and enforces non-redundancy (diversity). Through its connection to submodular set cover, the greedy algorithm inherits $(1+\ln n)$ and $(1-1/e)$ approximation guarantees, while admitting both dynamic and static variants for different computational trade-offs. 

Our empirical study across reranking, compression, and system prompt selection demonstrates \emph{consistent improvements over BM25}, a highly competitive retrieval baseline. Although our experiments are still limited in scope, the results suggest that $\gamma$-covering is particularly effective when forced to make \emph{hard decisions}, such as discarding gold facts under hard compression or retaining only a single chunk for system prompts. These are precisely the settings where our framework provides a principled safeguard. 

At the same time, we acknowledge that our experiments do not yet establish clear superiority over query-dependent PMI. We attribute this partly to the choice of dataset: HotpotQA contains distractors, which is query-dependent, but little true redundancy, whereas redundancy is abundant in real-world applications where $\gamma$-covering is expected to shine. Plus, PMI is query-dependent and incurs substantial online computational cost, whereas $\gamma$-covering operates offline and its cost can be amortized. As future work, we plan to extend evaluation to redundancy-rich settings such as multi-document QA and long-form summarization, where $\gamma$-covering is expected to shine. By reducing redundancy and stabilizing context under strict budgets, $\gamma$-covering has the potential to lower inference costs and make LLM pipelines more reliable and efficient.

\bibliography{main}

\begin{thebibliography}{39}
\providecommand{\natexlab}[1]{#1}
\providecommand{\url}[1]{\texttt{#1}}
\expandafter\ifx\csname urlstyle\endcsname\relax
  \providecommand{\doi}[1]{doi: #1}\else
  \providecommand{\doi}{doi: \begingroup \urlstyle{rm}\Url}\fi

\bibitem[Balestriero et~al.(2023)Balestriero, Bottou, LeCun, et~al.]{balestriero2023cookbook}
Randall Balestriero, L{\'e}on Bottou, Yann LeCun, et~al.
\newblock A cookbook of self-supervised learning.
\newblock In \emph{Advances in Neural Information Processing Systems}, volume~36, 2023.

\bibitem[Boucheron et~al.(2013)Boucheron, Lugosi, and Massart]{boucheron2013concentration}
St{\'e}phane Boucheron, G{\'a}bor Lugosi, and Pascal Massart.
\newblock \emph{Concentration Inequalities: A Nonasymptotic Theory of Independence}.
\newblock Oxford University Press, 2013.

\bibitem[Brown et~al.(2020)Brown, Mann, Ryder, Subbiah, Kaplan, Dhariwal, Neelakantan, Shyam, Sastry, Askell, Agarwal, Herbert-Voss, Krueger, Henighan, Child, Ramesh, Ziegler, Wu, Winter, Hesse, Chen, Sigler, Litwin, Gray, Chess, Clark, Berner, McCandlish, Radford, Sutskever, and Amodei]{brown2020fewshot-learner}
Tom~B. Brown, Benjamin Mann, Nick Ryder, Melanie Subbiah, Jared Kaplan, Prafulla Dhariwal, Arvind Neelakantan, Pranav Shyam, Girish Sastry, Amanda Askell, Sandhini Agarwal, Ariel Herbert-Voss, Gretchen Krueger, Tom Henighan, Rewon Child, Aditya Ramesh, Daniel~M. Ziegler, Jeffrey Wu, Clemens Winter, Christopher Hesse, Mark Chen, Eric Sigler, Mateusz Litwin, Scott Gray, Benjamin Chess, Jack Clark, Christopher Berner, Sam McCandlish, Alec Radford, Ilya Sutskever, and Dario Amodei.
\newblock Language models are few-shot learners.
\newblock In \emph{Proceedings of the 34th International Conference on Neural Information Processing Systems}, NIPS '20, Red Hook, NY, USA, 2020. Curran Associates Inc.
\newblock ISBN 9781713829546.

\bibitem[Cao et~al.(2007)Cao, Qin, Liu, Tsai, and Li]{cao2007learning}
Zhe Cao, Tao Qin, Tie-Yan Liu, Ming-Feng Tsai, and Hang Li.
\newblock Learning to rank: From pairwise approach to listwise approach.
\newblock In \emph{Proceedings of the 24th International Conference on Machine Learning (ICML)}, pp.\  129--136, 2007.
\newblock \doi{10.1145/1273496.1273513}.

\bibitem[Carbonell \& Goldstein(1998)Carbonell and Goldstein]{carbonell1998mmr}
Jaime Carbonell and Jade Goldstein.
\newblock The use of {MMR}, diversity-based reranking for reordering documents and producing summaries.
\newblock In \emph{Proceedings of the 21st Annual International ACM SIGIR Conference on Research and Development in Information Retrieval}, pp.\  335--336, 1998.
\newblock \doi{10.1145/290941.291025}.

\bibitem[Haken(1977)]{haken1977synergetics}
Hermann Haken.
\newblock \emph{Synergetics: An Introduction}.
\newblock Springer, Berlin, Heidelberg, 1977.

\bibitem[Hou et~al.(2025)Hou, Zhao, Wang, et~al.]{hou2025mcp}
Xinyi Hou, Yanjie Zhao, Shenao Wang, et~al.
\newblock Model context protocol (mcp): Landscape, security threats, and future research directions.
\newblock \emph{arXiv preprint arXiv:2503.23278}, 2025.
\newblock URL \url{https://arxiv.org/abs/2503.23278}.

\bibitem[Jiang et~al.(2023)Jiang, Wu, Lin, Yang, and Qiu]{jiang2023llmlingua}
Huiqiang Jiang, Qianhui Wu, Chin-Yew Lin, Yuqing Yang, and Lili Qiu.
\newblock Llmlingua: Compressing prompts for accelerated inference of large language models.
\newblock In \emph{Proceedings of the 2023 Conference on Empirical Methods in Natural Language Processing}, pp.\  13358--13376, 2023.

\bibitem[Jiang et~al.(2024)Jiang, Wu, Luo, Li, Lin, Yang, and Qiu]{jiang2024longllmlingua}
Huiqiang Jiang, Qianhui Wu, Xufang Luo, Dongsheng Li, Chin-Yew Lin, Yuqing Yang, and Lili Qiu.
\newblock Longllmlingua: Accelerating and enhancing llms in long context scenarios via prompt compression.
\newblock In \emph{Proceedings of the 62nd Annual Meeting of the Association for Computational Linguistics (Volume 1: Long Papers)}, pp.\  1658--1677, 2024.

\bibitem[Johnson(1974)]{johnson1974approximation}
David~S. Johnson.
\newblock Approximation algorithms for combinatorial problems.
\newblock \emph{Journal of Computer and System Sciences}, 9\penalty0 (3):\penalty0 256--278, 1974.
\newblock \doi{10.1016/S0022-0000(74)80044-9}.

\bibitem[Jung et~al.(2024)Jung, Lu, Jiang, Brahman, West, Koh, and Choi]{junginformation}
Jaehun Jung, Ximing Lu, Liwei Jiang, Faeze Brahman, Peter West, Pang~Wei Koh, and Yejin Choi.
\newblock Information-theoretic distillation for reference-less summarization.
\newblock In \emph{First Conference on Language Modeling}, 2024.

\bibitem[Karp(1972)]{karp1972reducibility}
Richard~M. Karp.
\newblock Reducibility among combinatorial problems.
\newblock In Raymond~E. Miller and James~W. Thatcher (eds.), \emph{Complexity of Computer Computations}, pp.\  85--103. Springer, 1972.
\newblock \doi{10.1007/978-1-4684-2001-2_9}.

\bibitem[Kempe et~al.(2003)Kempe, Kleinberg, and Tardos]{kempe2003maximizing}
David Kempe, Jon Kleinberg, and {\'E}va Tardos.
\newblock Maximizing the spread of influence through a social network.
\newblock In \emph{Proceedings of the 9th ACM SIGKDD International Conference on Knowledge Discovery and Data Mining (KDD)}, pp.\  137--146, 2003.
\newblock \doi{10.1145/956750.956769}.

\bibitem[Khuller et~al.(1999)Khuller, Moss, and Naor]{khuller1999budgeted}
Samir Khuller, Anna Moss, and Joseph~(Seffi) Naor.
\newblock The budgeted maximum coverage problem.
\newblock \emph{Information Processing Letters}, 70\penalty0 (1):\penalty0 39--45, 1999.
\newblock \doi{10.1016/S0020-0190(99)00059-9}.

\bibitem[Khurana \& Bhatnagar(2022)Khurana and Bhatnagar]{khurana2022investigating}
Alka Khurana and Vasudha Bhatnagar.
\newblock Investigating entropy for extractive document summarization.
\newblock \emph{Expert Systems with Applications}, 187:\penalty0 115820, 2022.

\bibitem[Kim(2008)]{kim2008coding}
Young-Han Kim.
\newblock A coding theorem for a class of stationary channels with feedback.
\newblock \emph{IEEE Transactions on Information Theory}, 54\penalty0 (4):\penalty0 1488--1499, April 2008.
\newblock \doi{10.1109/TIT.2008.917678}.

\bibitem[LeCun(2022)]{lecun2022path}
Yann LeCun.
\newblock A path towards autonomous machine intelligence.
\newblock \emph{arXiv preprint arXiv:2205.10377}, 2022.
\newblock URL \url{https://arxiv.org/abs/2205.10377}.

\bibitem[Lewis et~al.(2020)Lewis, Perez, Piktus, Petroni, Karpukhin, Goyal, K{\"u}ttler, Lewis, tau Yih, Rockt{\"a}schel, Riedel, and Kiela]{lewis2020retrieval}
Patrick Lewis, Ethan Perez, Aleksandra Piktus, Fabio Petroni, Vladimir Karpukhin, Naman Goyal, Heinrich K{\"u}ttler, Mike Lewis, Wen tau Yih, Tim Rockt{\"a}schel, Sebastian Riedel, and Douwe Kiela.
\newblock Retrieval-augmented generation for knowledge-intensive {NLP} tasks.
\newblock In \emph{Advances in Neural Information Processing Systems (NeurIPS)}, volume~33, pp.\  9459--9474, 2020.

\bibitem[Li et~al.(2020)Li, Rosenberg, Mitra, Jagerman, White, and Downey]{li2020parade}
Cheng Li, Daniel Rosenberg, Bhaskar Mitra, Rolf Jagerman, Ryen~W. White, and Doug Downey.
\newblock {PARADE}: Passage ranking with distilbert.
\newblock In \emph{Proceedings of the 2020 Conference on Empirical Methods in Natural Language Processing (EMNLP)}, pp.\  3681--3685, 2020.
\newblock \doi{10.18653/v1/2020.emnlp-main.300}.

\bibitem[Li et~al.(2023)Li, Dong, Guerin, and Lin]{li2023compressing}
Yucheng Li, Bo~Dong, Frank Guerin, and Chenghua Lin.
\newblock Compressing context to enhance inference efficiency of large language models.
\newblock In \emph{The 2023 Conference on Empirical Methods in Natural Language Processing}, 2023.

\bibitem[Lin et~al.(2021)Lin, Ma, Lin, Yang, Pradeep, and Nogueira]{Lin_etal_SIGIR2021_Pyserini}
Jimmy Lin, Xueguang Ma, Sheng-Chieh Lin, Jheng-Hong Yang, Ronak Pradeep, and Rodrigo Nogueira.
\newblock {Pyserini}: A {Python} toolkit for reproducible information retrieval research with sparse and dense representations.
\newblock In \emph{Proceedings of the 44th Annual International ACM SIGIR Conference on Research and Development in Information Retrieval (SIGIR 2021)}, pp.\  2356--2362, 2021.

\bibitem[Liu et~al.(2025)Liu, Qi, He, Bisazza, Sachan, and Cotterell]{liu-etal-2025-pointwise}
Tianyu Liu, Jirui Qi, Paul He, Arianna Bisazza, Mrinmaya Sachan, and Ryan Cotterell.
\newblock Pointwise mutual information as a performance gauge for retrieval-augmented generation.
\newblock In Luis Chiruzzo, Alan Ritter, and Lu~Wang (eds.), \emph{Proceedings of the 2025 Conference of the Nations of the Americas Chapter of the Association for Computational Linguistics: Human Language Technologies (Volume 1: Long Papers)}, pp.\  1628--1647, Albuquerque, New Mexico, April 2025. Association for Computational Linguistics.
\newblock ISBN 979-8-89176-189-6.
\newblock \doi{10.18653/v1/2025.naacl-long.78}.
\newblock URL \url{https://aclanthology.org/2025.naacl-long.78/}.

\bibitem[Massey(1990)]{massey1990causality}
James~L. Massey.
\newblock Causality, feedback and directed information.
\newblock In \emph{Proceedings of the International Symposium on Information Theory and its Applications (ISITA-90)}, pp.\  303--305, Waikiki, Hawaii, USA, November 1990.

\bibitem[Mei et~al.(2025)Mei, Yao, Ge, Wang, Bi, Cai, Liu, Li, Li, Zhang, Zhou, Mao, Xia, Guo, and Liu]{mei2025surveycontextengineeringlarge}
Lingrui Mei, Jiayu Yao, Yuyao Ge, Yiwei Wang, Baolong Bi, Yujun Cai, Jiazhi Liu, Mingyu Li, Zhong-Zhi Li, Duzhen Zhang, Chenlin Zhou, Jiayi Mao, Tianze Xia, Jiafeng Guo, and Shenghua Liu.
\newblock A survey of context engineering for large language models, 2025.
\newblock URL \url{https://arxiv.org/abs/2507.13334}.

\bibitem[Metzler \& Croft(2007)Metzler and Croft]{metzler2007linear}
Donald Metzler and W.~Bruce Croft.
\newblock Linear feature-based models for information retrieval.
\newblock In \emph{Proceedings of the 30th Annual International ACM SIGIR Conference on Research and Development in Information Retrieval}, pp.\  777--778, 2007.
\newblock \doi{10.1145/1277741.1277905}.

\bibitem[Nemhauser et~al.(1978)Nemhauser, Wolsey, and Fisher]{nemhauser1978analysis}
George~L. Nemhauser, Laurence~A. Wolsey, and Marshall~L. Fisher.
\newblock An analysis of approximations for maximizing submodular set functions.
\newblock \emph{Mathematical Programming}, 14\penalty0 (1):\penalty0 265--294, 1978.
\newblock \doi{10.1007/BF01588971}.

\bibitem[Nogueira \& Cho(2019)Nogueira and Cho]{nogueira2019passage}
Rodrigo Nogueira and Kyunghyun Cho.
\newblock Passage re-ranking with {BERT}.
\newblock In \emph{Proceedings of the 2019 Conference of the North American Chapter of the Association for Computational Linguistics (Demonstrations)}, pp.\  72--77, 2019.
\newblock \doi{10.18653/v1/N19-4013}.

\bibitem[Page et~al.(1999)Page, Brin, Motwani, and Winograd]{page1999pagerank}
Lawrence Page, Sergey Brin, Rajeev Motwani, and Terry Winograd.
\newblock The pagerank citation ranking: Bringing order to the web.
\newblock Technical Report 1999-66, Stanford InfoLab, 1999.
\newblock Previous number = SIDL-WP-1999-0120.

\bibitem[Pan et~al.(2024)Pan, Wu, Jiang, Xia, Luo, Zhang, Lin, R{\"u}hle, Yang, Lin, et~al.]{pan2024llmlingua}
Zhuoshi Pan, Qianhui Wu, Huiqiang Jiang, Menglin Xia, Xufang Luo, Jue Zhang, Qingwei Lin, Victor R{\"u}hle, Yuqing Yang, Chin-Yew Lin, et~al.
\newblock Llmlingua-2: Data distillation for efficient and faithful task-agnostic prompt compression.
\newblock In \emph{Findings of the Association for Computational Linguistics ACL 2024}, pp.\  963--981, 2024.

\bibitem[Peyrard(2019)]{peyrard2019simple}
Maxime Peyrard.
\newblock A simple theoretical model of importance for summarization.
\newblock In \emph{Proceedings of the 57th Annual Meeting of the Association for Computational Linguistics}, pp.\  1059--1073, 2019.

\bibitem[Robertson \& Zaragoza(2009)Robertson and Zaragoza]{Robertson2009BM25}
Stephen Robertson and Hugo Zaragoza.
\newblock The probabilistic relevance framework: Bm25 and beyond.
\newblock In \emph{Foundations and Trends in Information Retrieval}, volume~3, pp.\  333--389. Now Publishers Inc., 2009.
\newblock \doi{10.1561/1500000019}.

\bibitem[Shannon(1959)]{shannon1959coding}
Claude~E. Shannon.
\newblock Coding theorems for a discrete source with a fidelity criterion.
\newblock In \emph{IRE National Convention Record}, volume~7, pp.\  142--163, 1959.

\bibitem[Strouse \& Schwab(2017)Strouse and Schwab]{strouse2017information}
Daniel Strouse and David~J. Schwab.
\newblock The information bottleneck and geometric clustering.
\newblock \emph{Entropy}, 19\penalty0 (6):\penalty0 326, 2017.
\newblock \doi{10.3390/e19060326}.

\bibitem[Tishby et~al.(1999)Tishby, Pereira, and Bialek]{tishby1999information}
Naftali Tishby, Fernando~C. Pereira, and William Bialek.
\newblock The information bottleneck method.
\newblock \emph{Proceedings of the 37th Annual Allerton Conference on Communication, Control, and Computing}, pp.\  368--377, 1999.

\bibitem[von Foerster(1962)]{vonfoerster1962selforganization}
Heinz von Foerster.
\newblock Principles of self-organization.
\newblock In Heinz von Foerster and George~W. Zopf, Jr. (eds.), \emph{Principles of Self-Organization: Transactions of the University of Illinois Symposium}, pp.\  255--278. Pergamon Press, 1962.

\bibitem[Xu \& Raginsky(2017)Xu and Raginsky]{xu2017information}
Aolin Xu and Maxim Raginsky.
\newblock Information-theoretic analysis of generalization capability of learning algorithms.
\newblock In \emph{Advances in Neural Information Processing Systems (NeurIPS)}, volume~30, 2017.

\bibitem[Yang et~al.(2018)Yang, Qi, Zhang, Bengio, Cohen, Salakhutdinov, and Manning]{yang2018hotpotqa}
Zhilin Yang, Peng Qi, Saizheng Zhang, Yoshua Bengio, William~W. Cohen, Ruslan Salakhutdinov, and Christopher~D. Manning.
\newblock {HotpotQA}: A dataset for diverse, explainable multi-hop question answering.
\newblock In \emph{Conference on Empirical Methods in Natural Language Processing ({EMNLP})}, 2018.

\bibitem[Yao et~al.(2024)Yao, Li, and Zhao]{yao2024sirllm}
Yao Yao, Zuchao Li, and Hai Zhao.
\newblock Sirllm: Streaming infinite retentive llm.
\newblock In \emph{Proceedings of the 62nd Annual Meeting of the Association for Computational Linguistics (Volume 1: Long Papers)}, pp.\  2611--2624, 2024.

\bibitem[Zamani \& Croft(2018)Zamani and Croft]{zamani2018joint}
Hamed Zamani and W.~Bruce Croft.
\newblock Learning a joint search and recommendation model from user-item interactions.
\newblock In \emph{Proceedings of the 11th ACM International Conference on Web Search and Data Mining (WSDM)}, pp.\  717--725, 2018.
\newblock \doi{10.1145/3159652.3159687}.

\end{thebibliography}
\bibliographystyle{iclr2026_conference}

\appendix
\section{Appendix}

\subsection{Proof of PMI Coupling Bounds}\label{sec:proof-pmi-bounds}

\begin{proof}[Proof of Lemma~\ref{thm:coupling}]
By the chain rule for MI,
$I(q;C_j) = I(q;C_i) + I(q;C_j\mid C_i) - I(q;C_i\mid C_j)$. Also, $0 \le I(q;C_j|C_i) = H(C_j|C_i) - H(Cj|q,Ci) \le H(C_j|C_i)$ and symmetrically $0 \le I(q;C_i|C_j) \le H(C_i|C_j)$, the inequality follows.
\end{proof}

\subsection{Proof Of Estimator Consistency}\label{proof-estimator-consistency}

\begin{proof}[Proof of Theorem~\ref{thm:consistency}]
Fix a chunk pair $(C_i,C_j)$ and write $C_j=(y_{1},\dots,y_{T})$ with $T\!=\!T_j$ for brevity.
Let the (ideal) reference distribution be $p^\star$ and the parametric model be $p_\theta$.
For each token position $t$, define the \emph{per-token log-ratio scores}
\[
s_t^\star \;:=\; \log p^\star\!\big(y_t \,\big|\, y_{<t}\!,\, C_i^{\le t}\big)\;-\;\log p^\star\!\big(y_t \,\big|\, y_{<t}\big),
\qquad
s_t^\theta \;:=\; \log p_\theta\!\big(y_t \,\big|\, y_{<t}\!,\, C_i^{\le t}\big)\;-\;\log p_\theta\!\big(y_t \,\big|\, y_{<t}\big).
\]
By definition of $\hat{w}_{i\to j}$, we have
\[
\hat{w}_{i\to j}
\;=\;\frac{1}{T}\sum_{t=1}^{T}\big(\log p_\theta(y_t\mid y_{<t},C_i^{\le t})-\log p_\theta(y_t\mid y_{<t})\big)
\;=\;\frac{1}{T}\sum_{t=1}^{T} s_t^\theta.
\]
Moreover, by the standard identity for directed information\citep{massey1990causality},
\[
I^\star\!\big(C_i^{\le t};\,y_t \,\big|\, y_{<t}\big)
\;=\; \EE_{p^\star}\!\big[\, s_t^\star \,\big|\, y_{<t}\big],
\]
and hence the \emph{directed-information rate} appearing in the theorem equals
\[
\frac{1}{T}\sum_{t=1}^{T} I^\star\!\big(C_i^{\le t};\,y_t \,\big|\, y_{<t}\big)
\;=\; \frac{1}{T}\sum_{t=1}^{T} \EE_{p^\star}\!\big[\, s_t^\star \,\big|\, y_{<t}\big].
\]

We now decompose the target deviation by a triangle inequality into an \emph{approximation} term and a \emph{stochastic} term:
\begin{align*}
\bigg|\hat{w}_{i\to j} - \frac{1}{T}\sum_{t=1}^{T} I^\star(C_i^{\le t};y_t\mid y_{<t})\bigg|
&= \bigg|\frac{1}{T}\sum_{t=1}^{T} s_t^\theta \;-\; \frac{1}{T}\sum_{t=1}^{T} \EE_{p^\star}\!\big[ s_t^\star \mid y_{<t}\big]\bigg| \\
&\le \underbrace{\bigg|\frac{1}{T}\sum_{t=1}^{T} \big(s_t^\theta - s_t^\star\big)\bigg|}_{\text{(A) approximation error}}
\;+\; \underbrace{\bigg|\frac{1}{T}\sum_{t=1}^{T} \big(s_t^\star - \EE_{p^\star}[ s_t^\star \mid y_{<t}]\big)\bigg|}_{\text{(B) stochastic error}}.
\end{align*}

\paragraph{(A) Approximation error bound.}
By Assumption (A1), we have a uniform log-likelihood approximation error
\[
\sup_{z}\big|\log p_\theta(z) - \log p^\star(z)\big| \;\le\; \varepsilon.
\]
Applying this with $z=(y_t\mid y_{<t},C_i^{\le t})$ gives 

\[
\big|\,\log p_\theta(y_t\mid y_{<t},C_i^{\le t}) - \log p^\star(y_t\mid y_{<t},C_i^{\le t})\,\big|\le \varepsilon,
\]

Similarly, with $z=(y_t\mid y_{<t})$ gives

\[
\big|\,\log p_\theta(y_t\mid y_{<t}) - \log p^\star(y_t\mid y_{<t})\,\big|\le \varepsilon.
\]
Hence, by the triangle inequality for differences of these two terms,
\begin{align*}
\big|\,s_t^\theta - s_t^\star\,\big|
&= \big|\,(\log p_\theta(\cdot\mid y_{<t},C_i^{\le t})
      - \log p_\theta(\cdot\mid y_{<t})) \nonumber\\
&\quad - (\log p^\star(\cdot\mid y_{<t},C_i^{\le t})
      - \log p^\star(\cdot\mid y_{<t}))\,\big| \nonumber\\
&\le 2\varepsilon.
\end{align*}
Therefore,
\[
\bigg|\frac{1}{T}\sum_{t=1}^{T} \big(s_t^\theta - s_t^\star\big)\bigg|
\;\le\; \frac{1}{T}\sum_{t=1}^{T} 2\varepsilon \;=\; 2\varepsilon.
\]

\paragraph{(B) Stochastic error bound.}
Define the martingale difference sequence
\[
\xi_t \;:=\; s_t^\star - \EE_{p^\star}\!\big[ s_t^\star \mid y_{<t}\big],
\qquad t=1,\dots,T,
\]
with respect to the filtration $\mathcal{F}_t=\sigma(y_{\le t},C_i^{\le t})$.
By construction, $\EE_{p^\star}[\xi_t\mid \mathcal{F}_{t-1}]=0$.
Assumption (A2) states per-token losses are sub-Gaussian; since $s_t^\star$ is a \emph{difference} of two such log-likelihood terms, $\xi_t$ is also sub-Gaussian (with some proxy variance parameter $\sigma^2$). Hence, by the Azuma--Hoeffding (see, e.g., \citet{boucheron2013concentration}) inequality for martingale differences,
\[
\PP\!\left(\bigg|\frac{1}{T}\sum_{t=1}^{T} \xi_t\bigg| \ge u\right)
\;\le\; 2\exp\!\Big(\!-\frac{c\,T\,u^2}{\sigma^2}\Big)
\quad\text{for all }u>0,
\]
for a universal constant $c>0$.
Equivalently,
\[
\frac{1}{T}\sum_{t=1}^{T} \big(s_t^\star - \EE_{p^\star}[ s_t^\star \mid y_{<t}]\big)
\;=\; O_\PP\!\Big(\frac{1}{\sqrt{T}}\Big).
\]

\paragraph{Conclusion.}
Combining (A) and (B), we obtain
\[
\bigg|\hat{w}_{i\to j} - \frac{1}{T}\sum_{t=1}^{T} I^\star(C_i^{\le t};y_t\mid y_{<t})\bigg|
\;\le\; 2\varepsilon \;+\; O_\PP\!\Big(\frac{1}{\sqrt{T}}\Big).
\]
\end{proof}

In practice, autoregressive LMs operate with a finite context window. Assuming truncation error is negligible (A3), the same consistency guarantee holds with an added $\delta_{trunc}$ term.

\subsection{Proof of Safe Pruning}\label{sec:proof-safe-pruning}

\begin{proof}[Proof of Theorem~\ref{thm:safe-pruning}]
Work on the high-probability event $\mathcal{E}$ where all four estimation bounds hold. By Lemma~\ref{thm:coupling} and triangle bounds,
\begin{align*}
\widehat{\PMI}(q;C_j)
&\le \PMI^\star(q;C_j)+\delta_j
\;\le\; \PMI^\star(q;C_i)+H^\star(C_j)-\DI_{i\to j} + \delta_j \\
&\le\; \big(\widehat{\PMI}(q;C_i)+\delta_i\big) + \big(\hat{H}(C_j)+\eta_j\big) - \big(\hat{w}_{i\to j}-\epsilon_{ij}\big) + \delta_j \\
&=\; \widehat{\PMI}(q;C_i) + \hat{H}(C_j) - \hat{w}_{i\to j} + (\delta_i+\delta_j+\eta_j+\epsilon_{ij}).
\end{align*}
Since $\PP(\mathcal{E})\ge 1-\alpha$, the claim follows.
\end{proof}

\subsection{Proof of Soundness of $\gamma$-Representatives}\label{sec:proof-soundness}

\begin{proof}[Proof of Theorem~\ref{thm:soundness}]
By lemma~\ref{thm:coupling}, 

$$\PMI(q;C_j) \le \PMI(q; Ci) + H(C_j\mid C_i)$$

We also have $H(C_j \mid C_i) \le H(C_j) + \DI_{i\to j}$, hence if $i$ $\gamma$-covers $j$ we have $\PMI(q;C_j) \le \PMI(q;C_i) + \gamma$ in the ideal ($p^\star$) case. Incorporating estimation, Theorem~\ref{thm:safe-pruning} gives the high-probability bound

$$\PMI(\hat{q}; C_j) \le \PMI(\hat{q};C_i) + \hat{H}(C_j) -\hat{w}_{i\to j} + (\delta_i + \delta_j + \eta_j + \epsilon_{ij}),$$

and under the $\gamma$-cover test $\hat{w}_{i\to j} \ge \hat{H}(C_j) -\gamma$ this becomes

\begin{equation}\label{eq:bound}
\PMI(\hat{q}; C_j) \le \PMI(\hat{q};C_i) + \gamma + (\delta_i + \delta_j + \eta_j + \epsilon_{ij})    
\end{equation}

Now order the items of $U\setminus S$ arbitrarily as $j_1,\dots,j_m$ and apply the chain rule / subadditivity for mutual information to expand $I(q;U)$ by adding $C_{j_t}$ one at a time. Each increment is $I(q;C_{j_t}\mid S\cup{j_{<t}})\le \PMI(q;C_{j_t})$; substitute the bound (\ref{eq:bound}) with its representative $i\in S$, and sum over $t=1,\dots,m$. This yields the claimed bound relative to $I(q;S)$ with the additive slacks. The uniform-slack corollary follows immediately.
\end{proof}

\subsection{Proof of Diversity Margin}\label{sec:proof-diversity}

\begin{proof}[Proof of Proposition~\ref{prop:diversity}]
By definition of a representative set for a $\gamma$-cover, for any distinct $i\ne j\in S$ neither $i$ $\gamma$-covers $j$ nor $j$ $\gamma$-covers $i$. Thus $DI_{i\to j}<H(C_j)-\gamma$ and $DI_{j\to i}<H(C_i)-\gamma$. Using $H(C_j\mid C_i)=H(C_j)-I(C_i;C_j)\le H(C_j)-DI_{i\to j}$ (and symmetrically), both conditional entropies exceed $\gamma$, establishing the claim.
\end{proof}

\subsection{Proof of Diffusion Algorithm Complexity and Termination}\label{sec:proof-algorithm}

In Algorithm~\ref{alg:dig-r}, computing all pairwise $\hat{w}_{i\to j}$ requires $O(M^2 T)$ forward passes, where $M$ is the candidate pool size and $T$ the average chunk length. The diffusion step costs $O(M^2)$, reducible to $O(MK)$ if each node retains only its top-$K$ predictive neighbors. Iterating the update
\[
r^{(t+1)} = \alpha r^{(0)} + (1-\alpha) P^\top r^{(t)}
\]
yields an affine contraction with constant $1-\alpha$, guaranteeing a unique fixed point and geometric convergence by the Banach fixed-point theorem~\citep{page1999pagerank}. Theorem~\ref{thm:algo} summarizes the computational and convergence properties.

\begin{theorem}[Algorithmic complexity and termination]
\label{thm:algo}
Consider DIG-R with neighborhood size $M$ and chunk length $\le T$.
\begin{enumerate}
\item Edge computation costs $O(M^2 T)$ forward tokens, batchable across GPUs.
\item Propagation costs $O(M^2)$ (or $O(MK)$ for sparse $K$-Nearest Neighbor edges).
\item With damping $\alpha\in(0,1)$, the propagation step is a contraction mapping and converges to a unique fixed point in $O(\log(1/\varepsilon))$ iterations.
\end{enumerate}
\end{theorem}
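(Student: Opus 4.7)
The plan is to verify the three claims in turn, using only elementary counting and the Banach fixed-point theorem. First, for edge construction: by Definition~\ref{def:hat_w}, computing $\hat{w}_{i\to j}$ for a single ordered pair requires evaluating $\NLL_\theta(C_j)$ and $\NLL_\theta(C_j\mid C_i)$, each of which is a forward pass over at most $T$ tokens of $C_j$. The $M$ unconditional NLLs can be cached once, so the bottleneck is the $M(M-1)$ conditional passes, yielding $O(M^2 T)$ forward tokens in total. Because the $(i,j)$ computations are independent, they parallelize across GPU batches, justifying the batchability remark. For the propagation cost, the single update is one matrix--vector multiply $P^\top r^{(0)}$ with $P\in\mathbb{R}^{M\times M}$, which is $O(M^2)$ dense; if each column is thresholded to its top-$K$ entries before normalization, the cost drops to $O(MK)$.

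For the contraction claim, I would cast the iteration as the affine operator $T(r) := \alpha r^{(0)} + (1-\alpha)\,P^\top r$ on $\mathbb{R}^M$ equipped with the $\ell_1$ norm. Column-normalization of $W$ makes $P$ column-stochastic, so $\|P^\top x\|_1 \le \|x\|_1$ for all $x$, and
\[
\|T(r) - T(r')\|_1 \;=\; (1-\alpha)\,\|P^\top(r-r')\|_1 \;\le\; (1-\alpha)\,\|r-r'\|_1.
\]
Since $1-\alpha<1$, the map $T$ is a strict contraction on a complete normed space, so Banach's fixed-point theorem delivers a unique $r^\star$ with $T(r^\star)=r^\star$ and the geometric rate $\|r^{(t)}-r^\star\|_1 \le (1-\alpha)^t\,\|r^{(0)}-r^\star\|_1$. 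Solving $(1-\alpha)^t \le \varepsilon$ gives $t \;\ge\; \log(1/\varepsilon)/\log\bigl(1/(1-\alpha)\bigr) \;=\; O(\log(1/\varepsilon))$ for fixed $\alpha$, as claimed.

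The one subtlety that needs care is the column-stochasticity of $P$: the raw weights $\hat{w}_{i\to j}$ are NLL drops and can in principle be negative or unbounded, so the instruction to ``normalize columns of $W$'' must be interpreted as a transformation (e.g.\ clipping to nonnegative values and dividing by the column sum, or applying a column-wise softmax) that actually produces a stochastic $P$. Once that convention is fixed, both the matrix--vector cost and the $\ell_1$-contraction step go through verbatim; for alternative normalizations one would verify the contraction in whichever induced norm gives $\|P^\top\|\le 1$, possibly replacing $\ell_1$ by $\ell_\infty$ or a spectral-radius argument. This is the main place where the proof could stumble, but it is a modeling choice rather than a genuine mathematical obstacle.
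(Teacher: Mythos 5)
Your proof mirrors the paper's argument essentially step for step: the same $O(M^2T)$ counting for edge construction, the same $O(M^2)$ (or $O(MK)$ for sparse top-$K$) cost for the diffusion step, and the same Banach fixed-point argument applied to the affine map $F(r)=\alpha r^{(0)}+(1-\alpha)P^\top r$. Your added observation that the $M$ unconditional NLLs can be cached so that only the $M(M-1)$ conditional passes dominate is a small but genuine refinement the paper does not spell out.

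There is, however, one error that you and the paper share, and it is worth fixing. For \emph{column}-stochastic $P$, the transpose $P^\top$ is \emph{row}-stochastic, and a row-stochastic matrix is nonexpansive in $\ell_\infty$, not $\ell_1$: indeed $\|P^\top\|_1$ equals the maximum absolute row sum of $P$, which column-stochasticity does not control. A two-by-two counterexample: $P=\bigl(\begin{smallmatrix}1&1\\0&0\end{smallmatrix}\bigr)$ is column-stochastic, yet $\|P^\top e_1\|_1 = 2 > 1 = \|e_1\|_1$, so the inequality $\|P^\top x\|_1\le\|x\|_1$ you (and the paper) assert is false. Your closing hedge about ``possibly replacing $\ell_1$ by $\ell_\infty$'' is exactly the right instinct: in $(\mathbb{R}^M,\|\cdot\|_\infty)$ a row-stochastic $P^\top$ satisfies $\|P^\top v\|_\infty\le\|v\|_\infty$, so $F$ is a $(1-\alpha)$-contraction on a complete space and the $O(\log(1/\varepsilon))$ iteration count goes through unchanged. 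Alternatively, one could normalize rows of $W$ (so $P^\top$ is column-stochastic) and keep the $\ell_1$ norm. Your other caveat, that the raw $\hat w_{i\to j}$ are NLL drops and may be negative, is also legitimate: some clipping or softmax step must precede normalization for $P$ to be stochastic at all, and since the entire contraction argument rests on that stochasticity, the convention should be stated explicitly rather than left implicit in the phrase ``normalize columns.''
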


\begin{proposition}[Convergence of the damped propagation]
Let $P \in \mathbb{R}^{M\times M}$ be column-stochastic (each column sums to $1$ and entries are nonnegative), fix $\alpha \in (0,1)$, and define
\[
F(r)\;:=\; \alpha\, r^{(0)} + (1-\alpha)\, P^\top r, 
\qquad r \in \mathbb{R}^M.
\]
Then $F$ is a contraction mapping on $(\mathbb{R}^M,\|\cdot\|_1)$ with contraction factor $(1-\alpha)$, hence admits a unique fixed point $r^\star$ and the iteration $r^{(t+1)}=F(r^{(t)})$ converges to $r^\star$ at a geometric rate:
\[
\|r^{(t)}-r^\star\|_1 \;\le\; (1-\alpha)^t \,\|r^{(0)}-r^\star\|_1,
\]
so that $\|r^{(t)}-r^\star\|_1 \le \varepsilon$ after $t = O(\log(1/\varepsilon))$ steps.
\end{proposition}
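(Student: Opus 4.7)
The plan is to apply the Banach fixed-point theorem on the complete metric space $(\mathbb{R}^M, \|\cdot\|_1)$, from which both the existence/uniqueness of $r^\star$ and the geometric rate fall out in one stroke. The whole argument reduces to showing that $F$ is a $(1-\alpha)$-contraction in $\ell_1$, which is really a single operator-norm bound on $P^\top$.

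First I would exploit the affine structure of $F$: since $F(r) = \alpha r^{(0)} + (1-\alpha) P^\top r$ differs from $F(r')$ only in the linear part, the constant term cancels and
\[
F(r) - F(r') \;=\; (1-\alpha)\, P^\top (r-r').
\]
Taking $\ell_1$ norms immediately gives $\|F(r) - F(r')\|_1 = (1-\alpha)\,\|P^\top(r-r')\|_1$, so the contraction claim collapses to proving $\|P^\top y\|_1 \le \|y\|_1$ for every $y \in \mathbb{R}^M$.

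The norm bound is where the column-stochastic hypothesis enters, and this is the step I expect to be the main obstacle since it is the only place the hypothesis is actually used (and it is easy to get the direction wrong given that $P^\top$, not $P$, drives the iteration). The route I would take is to write $\mathbf{1}^\top P = \mathbf{1}^\top$, decompose $y$ into its positive and negative parts $y = y_+ - y_-$ with disjoint support so that $\|y\|_1 = \mathbf{1}^\top y_+ + \mathbf{1}^\top y_-$, then combine the triangle inequality $\|P^\top y\|_1 \le \|P^\top y_+\|_1 + \|P^\top y_-\|_1$ with the nonnegativity and conservation identity to bound each piece by the corresponding mass of $y_\pm$. Multiplying through by $(1-\alpha)$ then yields the contraction $\|F(r)-F(r')\|_1 \le (1-\alpha)\,\|r-r'\|_1$ with factor strictly less than one.

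Once contraction is in hand, Banach's theorem produces a unique fixed point $r^\star$ in $\mathbb{R}^M$. The geometric-rate claim follows by a one-line induction: applying $F$ to both $r^{(t-1)}$ and $r^\star$ and using contraction gives $\|r^{(t)} - r^\star\|_1 \le (1-\alpha)\,\|r^{(t-1)} - r^\star\|_1$, hence $\|r^{(t)} - r^\star\|_1 \le (1-\alpha)^t\,\|r^{(0)} - r^\star\|_1$. Solving $(1-\alpha)^t \le \varepsilon$ finally yields $t \ge \log(1/\varepsilon)/\log\!\big(1/(1-\alpha)\big) = O(\log(1/\varepsilon))$, as claimed.
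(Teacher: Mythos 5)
Your proof follows the same skeleton as the paper's --- reduce the contraction claim to an $\ell_1$ operator-norm bound on $P^\top$ and then invoke Banach --- but the crucial norm bound $\|P^\top y\|_1 \le \|y\|_1$ does not follow from column-stochasticity, and your sketch of it places the transpose on the wrong side. Since $P^\top$ and $y_\pm$ are entrywise nonnegative, $\|P^\top y_\pm\|_1 = \mathbf{1}^\top P^\top y_\pm = (P\mathbf{1})^\top y_\pm$; collapsing this to $\mathbf{1}^\top y_\pm$ requires $P\mathbf{1}=\mathbf{1}$ (row-stochastic), whereas the identity you actually invoke, $\mathbf{1}^\top P = \mathbf{1}^\top$, only gives $\|P y_\pm\|_1 = \mathbf{1}^\top y_\pm$, i.e.\ $\ell_1$-nonexpansiveness of $P$, not of $P^\top$. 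The claim is genuinely false: take $P = \left(\begin{smallmatrix}1&1\\0&0\end{smallmatrix}\right)$, which is column-stochastic, and $y=(1,0)^\top$; then $P^\top y = (1,1)^\top$ and $\|P^\top y\|_1 = 2 > 1 = \|y\|_1$, so $F$ fails to contract in $\ell_1$.

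To be fair, you have reproduced a gap already present in the paper, which asserts that a row-stochastic $P^\top$ is ``nonexpansive in $\ell_1$'' --- the wrong norm for that structure (row-stochastic matrices are $\ell_\infty$-nonexpansive; column-stochastic ones are $\ell_1$-nonexpansive). The statement must be repaired before either proof can be made rigorous, by one of: (i) assume $P$ row-stochastic, so that $P^\top$ is column-stochastic and hence $\ell_1$-nonexpansive; (ii) keep $P$ column-stochastic but measure the contraction in $\ell_\infty$; or (iii) keep both and iterate with $Pr$ rather than $P^\top r$ (the usual PageRank convention). Your positive/negative-part decomposition is exactly the right tool and would go through verbatim under any of these fixes, but as written it proves nonexpansiveness of the wrong operator. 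Note also that without $\ell_1$-contraction the proposition's qualitative conclusions (unique fixed point, geometric convergence) still hold, since $\rho\bigl((1-\alpha)P^\top\bigr)=(1-\alpha)<1$ by Perron--Frobenius, but the asserted per-step factor $(1-\alpha)$ in $\ell_1$ is not guaranteed.
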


\begin{proof}[Proof sketch]
For any $r,u \in \mathbb{R}^M$,
\[
\|F(r)-F(u)\|_1 \;=\; (1-\alpha)\,\|P^\top(r-u)\|_1.
\]
Since $P$ is column-stochastic, $P^\top$ is row-stochastic and is a nonexpansive linear operator in $\ell_1$:
\[
\|P^\top v\|_1 \;\le\; \|v\|_1 \quad \text{for all } v\in\mathbb{R}^M.
\]
Therefore,
\[
\|F(r)-F(u)\|_1 \;\le\; (1-\alpha)\,\|r-u\|_1,
\]
so $F$ is a contraction with constant $(1-\alpha)<1$. By the Banach fixed-point theorem, $F$ has a unique fixed point $r^\star$ and the Picard iteration $r^{(t+1)}=F(r^{(t)})$ converges to $r^\star$ with
\[
\|r^{(t)}-r^\star\|_1 \;\le\; (1-\alpha)^t \,\|r^{(0)}-r^\star\|_1.
\]
Solving $(1-\alpha)^t \|r^{(0)}-r^\star\|_1 \le \varepsilon$ yields 
$t \ge \frac{\log\!\big(\|r^{(0)}-r^\star\|_1/\varepsilon\big)}{\log(1/(1-\alpha))} = O(\log(1/\varepsilon))$.
\end{proof}

\end{document}